
\documentclass[journal]{IEEEtran}
\ifCLASSINFOpdf
\else
\fi
\hyphenation{op-tical net-works semi-conduc-tor}

\usepackage{amsmath}
\usepackage{graphicx}
\usepackage{amssymb}
\usepackage{colortbl}
\usepackage{arydshln}
\usepackage{stfloats}
\usepackage[vlined,boxed,ruled,linesnumbered]{algorithm2e}
\SetKwProg{Fn}{Function}{}{}
\usepackage{algpseudocode}
\usepackage{multirow}
\usepackage{extarrows}
\usepackage{mathrsfs}
\usepackage{upgreek}
\newtheorem{problem}{Problem}[section]

\newtheorem{assumption}{Assumption}[section]
\newtheorem{definition}{Definition}[section]
\newtheorem{theorem}{Theorem}[section]
\newtheorem{lemma}{Lemma}[section]
\newtheorem{proposition}{Proposition}[section]

\begin{document}
%
\title{ Reachable Set Estimation for Neural Network Control Systems: A Simulation-Guided Approach}
%
%
%

\author{Weiming~Xiang,~\IEEEmembership{Senior Member,~IEEE,}
        Hoang-Dung~Tran, Xiaodong Yang and Taylor T. Johnson
\thanks{W. Xiang is with the School of Computer and Cyber Sciences, Augusta University, Augusta
GA 30912, USA. Email: wxiang@augusta.edu.}
\thanks{H.-D. Tran, X. Yang and T. T. Johnson are with Department of Electrical Engineering and Computer Science, Vanderbilt University, Nashville TN 37212, USA. Emails: dung.h.tran@vanderbilt.edu (H.-D. Tran), xiaodong.yang@vanderbilt.edu (X. Yang), taylor.johnson@vanderbilt.edu (T. T. Johnson) }
}

%
%

\markboth{IEEE Transactions on Neural Networks and Learning Systems,~Vol.~00, No.~0, XX~XXXX}%
{Shell \MakeLowercase{\textit{et al.}}: Bare Demo of IEEEtran.cls for IEEE Journals}
%



\maketitle

\begin{abstract}
The vulnerability of artificial intelligence (AI) and machine learning (ML) against adversarial disturbances and attacks significantly restricts their applicability in safety-critical systems including cyber-physical systems (CPS) equipped with neural network components at various stages of sensing and control.
This paper addresses the reachable set estimation and safety verification problems for dynamical systems embedded with neural network components serving as feedback controllers. The closed-loop system can be abstracted in the form of a continuous-time sampled-data system under the control of a neural network controller. First, a novel reachable set computation method in adaptation to simulations generated out of neural networks is developed. The reachability analysis of a class of feedforward neural networks called multilayer perceptrons (MLP) with general activation functions is performed in the framework of interval arithmetic. Then, in combination with reachability methods developed for various dynamical system classes modeled by ordinary differential equations, a recursive algorithm is developed for over-approximating the reachable set of the closed-loop system. The safety verification for neural network control systems can be performed by examining the emptiness of the intersection between the over-approximation of reachable sets and unsafe sets. The effectiveness of the proposed approach has been validated with evaluations on a robotic arm model and an adaptive cruise control system.
\end{abstract}

\begin{IEEEkeywords}
Neural network control systems, reachability, safety verification, simulation.
\end{IEEEkeywords}

%
\IEEEpeerreviewmaketitle

\section{Introduction}
Neural networks have  been demonstrated to be effective tools in controlling complex systems in a variety of research activities such as stabilization \cite{wu2014exponential,yu1998stable}, adaptive control \cite{ge1999adaptive,hunt1992neural}. In some latest applications, neural networks have been deployed and played a critical role in high-safety-assurance systems such as autonomous systems \cite{julian2017neural}, unmanned vehicles \cite{bojarski2016end} and aircraft collision avoidance systems \cite{julian2016policy}. However, due to the vulnerability neural networks against adversarial disturbances/attacks and the black-box nature of neural networks, such controllers with neural network structure, in essence, are only restricted to the control applications with the lowest levels of requirements of safety as there is a short of effective methods to compute the output reachable set of neural networks and further assure the safety of underlying closed-loop systems. It has been frequently observed that even a slight perturbation against the input of a well-trained neural network will produce a completely incorrect and unpredictable output \cite{szegedy2013intriguing}. As we consider a closed-loop system with a feedback channel involving neural networks, the safety issues will inevitably arise since disturbances and uncertainties are unavoidable in measurement and control channels, which may result in undesirable and unsafe system behaviors even instability. Furthermore, with advanced adversarial machine learning techniques developed recently, such safety matters for safety-critical control systems with neural network controllers only become even much worse. Therefore, to integrate AI/ML components such as neural networks into safety-critical control systems, safety verification for such AI/ML systems is required at all stages for the purpose of safety assurance. However, because of the sensitivity of neural networks against perturbations and the complex structure of neural networks, the verification of neural networks represents extreme difficulties. It has been demonstrated that a simple property verification of a small scale neural networks is a non-deterministic polynomial (NP) complete problems \cite{katz2017reluplex}.

In recent years, there are a few methods developed for the verification of neural networks. A simulation-based approach was developed in \cite{xiang2018output} to convert the output reachable set computation problem of a feedforward neural network into a sequence of four convex optimization problems utilizing the concept of maximal sensitivity.  This paper will particularly focus on improving the simulation-based approach developed in  \cite{xiang2018output} for the output set over-approximation of feedforward neural networks with general activation functions. A novel adaptive simulation-guided method will be developed and further integrated for safety verification of closed-loop systems with neural network controllers.

\subsection{Related Work} 

Formal verification of neural networks has been well-recognized in recent literature.  One of the earliest methods is the abstraction-refinement approach proposed in \cite{pulina2010abstraction,pulina2012challenging}, which is developed for computing the output set of a feedforward neural network to perform verification. In \cite{huang2017safety}, a satisfiability modulo theories (SMT) solver was proposed for the verification of feedforward neural networks. Some Lyapunov function based approaches were proposed for dynamical systems with neural network structures \cite{xu2016reachable,zuo2014non,thuan2018reachable}, in which invariant sets are constructed to estimate reachable sets. 
For a special class of neural networks with rectified linear unit (ReLU) neurons, several methods have been reported in the literature such as mixed-integer linear programming (MILP) approaches \cite{lomuscio2017an_arxiv,dutta2017output}, linear programming (LP) based approaches \cite{ehlers2017formal}, the Reluplex algorithm that stems from the Simplex algorithm \cite{katz2017reluplex}, and polytope-operation-based approaches \cite{xiang2017reachable_arxiv,tran2019star}. For neural networks with general activation functions, the sensitivity for neural networks was introduced in \cite{zeng2001sensitivity,zeng2003quantified} and used for various problems, for instance, 
weight selection \cite{piche1995selection}, learning algorithm improvement \cite{xi2013architecture}, and architecture construction \cite{shi2005sensitivity}. Based on the maximal sensitivity concept,  a simulation-based verification approach is introduced in \cite{xiang2018output}. The output reachable set estimation for feedforward neural networks with general activation functions is formulated in terms of four convex optimization problems. These results are able to compute estimated and exact output sets of a feedforward neural network, and it, therefore, implies the availability of reachable set estimation and safety verification of closed-loop systems equipped with neural network controllers as shown in \cite{xiang2018reachable_acc,xiang2019reachability,dutta2019reachability}. The Verisig approach \cite{ivanov2019verisig},  transforms a neural network controller with sigmoid activation functions to an equivalent nonlinear hybrid system. This is combined with plant dynamics by using ODE reachability analysis routines for safety verification. All those existing methods were developed mainly based on exploiting the neural network itself such as the piecewise linear feature of ReLU activation functions or transformation of neural networks. In this work, we emphasize that our method focuses on both interval-based derivations of neural networks as well as taking advantage of simulations originated from neural networks for reachable set computation and safety verification of neural network control systems. 

\subsection{Contributions}
This paper focuses on improving the simulation-based approach developed in  \cite{xiang2018output} for the output set over-approximation of feedforward neural networks with general activation functions. A novel adaptive simulation-guided method will be developed and further integrated for safety verification of closed-loop systems with neural network controllers. 
In this paper, we develop a novel simulation-guided approach to perform the output reachable set estimation of feedforward neural networks with general activation functions. The algorithm is formulated in the framework of interval arithmetic and under the guidance of a finite number of simulations. The developed method using the information of simulations is able to provide much less computational cost than the previous paper \cite{xiang2018output}. As shown by a robotic arm model example, it only needs about 3\% computational cost of the method proposed in \cite{xiang2018output} to obtain a same interval-based reachability analysis result. We also extend our reachable set estimation result for safety verification of neural network control systems, in which  plants are modeled by ordinary differential equations (ODEs). We develop an algorithm to compute the reachable set of a neural network control system modeled by sampled-data systems. Based on the reachable set estimation, a safety verification algorithm is developed to provide formal safe assurance for neural network control systems, and an adaptive cruise control system using a software prototype is proposed to demonstrate our method.

\section{System Description and Problem Formulation}
\subsection{Neural Network Control Systems}
In this paper we consider a class of continuous-time nonlinear systems in the form of
\begin{align} \label{system}
\left\{ {\begin{array}{*{20}l}
	\dot{\mathbf{x}}(t) = f(\mathbf{x}(t),\mathbf{u}(t))\\
	\mathbf{y}(t) = h(\mathbf{x}(t))
	\end{array} } \right.
\end{align}
where $\mathbf{x}(t) \in \mathbb{R}^{n_x}$ is the state vector, $\mathbf{u}(t) \in \mathbb{R}^{n_u}$  is the control input and $\mathbf{y}(t) \in \mathbb{R}^{n_y}$ is the  controlled output, respectively. The  nonlinear controller is considered in its general form of 
\begin{align}
\mathbf{u}(t) = \gamma(\mathbf{y}(t),\mathbf{v}(t),t)
\end{align}
where $\mathbf{v}(t) \in \mathbb{R}^{n_v}$ is the reference input. As we know, the controller design problem for nonlinear systems in the general form is quite challenging and still open even $f$ and $h$ are available. To avoid the difficulties arising in such controller design problems for systems with complex model or even model unavailable, some data-driven approaches which only rely on the input-output data of the system were developed. In this paper, we consider a class of feedforward neural network trained by input-output data as the feedback controller of  dynamical systems. The feedforward neural network is in the following general form of
\begin{align}\label{neural_network_controller}
\mathbf{u}(t) = \Phi(\mathbf{y}(t),\mathbf{v}(t))
\end{align}
where $\Phi: \mathbb{R}^{n_y} \times \mathbb{R}^{n_v} \to \mathbb{R}^{n_u}$ is a neural network trained by data collected during system operations. We can rewrite the neural network controller in a more compact form of 
\begin{align}
\label{compact_neural_network_controller}
\mathbf{u}(t) = \Phi(\boldsymbol{\upeta}(t))
\end{align}
where $\boldsymbol{\upeta}(t) = [\mathbf{y}^{\top}(t)~\mathbf{v}^{\top}(t)]^{\top}$.

In practice, it always takes certain amount of time to compute the output signal of the neural network as the control input of the controlled plant. Hence, the neural network controller produces the control signals at every sampling time instant $t_k$, $k \in \mathbb{N}$, and then the controller maintains its value between two successive sampling instants $t_k$ and $t_{k+1}$. Due to the sampling mechanism of practical control systems, we can formulate the sampled neural network controller in the form of
\begin{align}\label{con_neural_network_controller}
\mathbf{u}(t) = \Phi(\boldsymbol{\upeta}(t_k)),~t \in [t_k,t_{k+1}).
\end{align}
and by substituting the above controller into  system (\ref{system}), we can obtain the closed-loop system in the following form
\begin{align}\label{sam_closed_loop}
\left\{ {\begin{array}{*{20}l}
	\dot{\mathbf{x}}(t) = f(\mathbf{x}(t),\Phi(\boldsymbol{\upeta}(t_k)))\\
	\mathbf{y}(t) = h(\mathbf{x}(t))
	\end{array} } \right.,~t \in [t_k,t_{k+1})
\end{align}
where $\boldsymbol{\upeta}(t_k) = [\mathbf{y}^{\top}(t_k)~ \mathbf{v}^{\top}(t_k)]^{\top}$. The mechanism of a sampled-data neural network system in the form of (\ref{sam_closed_loop}) is illustrated in Fig. \ref{nncs}. It worth mentioning that sampled-data model for neural network control systems in the form of (\ref{sam_closed_loop}) can be found in  a variety of articles such as \cite{wu2014exponential}. 

\begin{figure}
	\includegraphics[width=8.5cm]{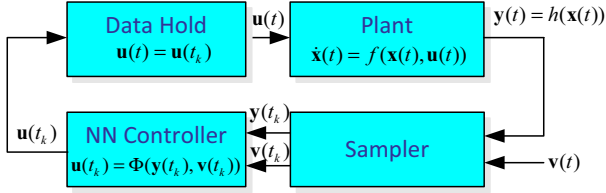}
	\caption{\boldmath Block diagram for continuous-time sampled-data neural network control systems.  }
	\label{nncs} 
\end{figure}

For the reachable set estimation of a neural network control system in the form of (\ref{sam_closed_loop}), the main challenge comes from the reachable set computation of the neural network controller since there are extensive reachable set estimation tools available for continuous-time ODE models. Even though a feedforward neural network is able to approximate any real-value function according to the Universal Approximation Theorem \cite{hornik1989multilayer},   a well-trained neural network will generate completely incorrect output if even slight disturbances are imposed to their input. Thus, the safety would become a major issue if we deploy neural network controllers to safety-critical control systems.

According to the Universal Approximation Theorem \cite{hornik1989multilayer}, it guarantees that, in principle, such a feedforward neural network, namely the function $\Phi(\cdot)$, is able to approximate any nonlinear real-valued function. Despite the impressive ability to approximate nonlinear functions,  much complexities represent in predicting the output behaviors of neural network controllers as well as the closed-loop systems. In most real applications, a neural network is usually viewed as a black box to generate a desirable output with respect to a given input. However, regarding property verification such as the safety verification, it has been observed that even a well-trained neural network can react in unexpected and incorrect ways to even slight perturbations of their inputs, which could result in unsafe even unstable systems. Thus, to verify safety properties of dynamical systems with neural network components, it is necessary to perform reachability analysis for the closed-loop system in the form of  (\ref{sam_closed_loop}) over a given finite time horizon, which is able to cover all possible values of system state in the given interval, to assure that the state trajectories of the closed-loop system will not attain unreasonable or even unsafe values. 

\subsection{Feedforward Neural Networks}

In this paper, we consider a class of feedforward neural networks which is called multilayer perceptrons (MLP). The basic processing elements in MLP are neurons which are defined by  activation functions in the form of 
A neural network consists of a number of interconnected neurons. Each neuron is a simple processing element that responds to the weighted inputs it received from other neurons. In this paper, we consider the most popular and general feedforward neural network, multilayer perceptrons (MLP). Generally, an MLP consists of three typical classes of layers: An input layer, that serves to pass the input vector to the network, hidden layers of computation neurons, and an output layer composed of at least a computation neuron to generate the output vector. The action of a neuron depends on its activation function, which is described as 
\begin{align}
u_i = \phi\left(\sum\nolimits_{j=1}^{n}\omega_{ij} \eta_j + \theta_i\right)
\end{align}
where $\eta_j$ is the $j$th input of the $i$th neuron, $\omega_{ij}$ is the weight from the $j$th input to the $i$th neuron, $\theta_i$ is the bias of the $i$th neuron, $u_i$ is the output of the $i$th neuron, and $\phi(\cdot)$ is the activation function. There are a variety of activation functions such as ReLU, tanh, logistic. In this paper, our approach is able to deal with activation functions without considering their specific forms. 

In this work, we assume the MLP has $L$ layers, and each layer $\ell$, $1 \le \ell \le L$ consists of $n^{\{\ell\}}$ neurons. Especially, we use layer $\ell =0$ to denote the input layer where  the input vector is passed to the network, and $n^{\{0\}}$ is number of the inputs for the network. In addition, $n^{\{L\}}$ is used to denote the output layer. For the layer $\ell$, the input vector of the layer $\ell$ is $\boldsymbol{\upeta}^{\{\ell\}}$, and the weight matrix and the bias vector are
\begin{align}
\mathbf{W}^{\{\ell\}} = \left[\omega_{1}^{\{\ell\}},\ldots,\omega_{n^{\{\ell\}}}^{\{\ell\}}\right]^{\top}
\\
\boldsymbol {\uptheta}^{\{\ell\}}=\left[\theta_1^{\{\ell\}},\ldots,\theta_{n^{\{\ell\}}}^{\{\ell\}}\right]^{\top}
\end{align} 
and the output vector of layer $\ell$ can be written in the form of 
\begin{equation}
\mathbf{u}^{\{\ell\}}=\phi_{\ell}(\mathbf{W}^{\{\ell\}}\boldsymbol{\upeta}^{\{\ell\}}+\uptheta^{\{\ell\}})
\end{equation} 
where $\phi_{\ell}(\cdot)$ denotes the activation function of layer $\ell$.

As the output of  layer $\ell$ equals the input of its successive layer $\ell+1$,  we can obtain the mapping from the input vector of input layer $\ell=0$ to output vector of the output layer $\ell =L$. Namely, the input-output relationship of an MLP can be expressed in the following form
\begin{equation}\label{NN}
\mathbf{u}^{\{L\}} = \Phi (\boldsymbol{\upeta}^{\{0\}})
\end{equation}    
where $\Phi(\cdot) \triangleq \phi_L  \circ \phi_{L - 1}  \circ  \cdots  \circ \phi_1(\cdot) $.

\subsection{Problem Formulation}
Given an input set, the output set of an MLP is given by the following definition.
\begin{definition}
	Given an input set $\mathcal{H}$, the output set of the MLP in the form of (\ref{NN}) is 
	\begin{align}
	\mathcal{U} = \left\{\mathbf{u} ^{\{L\}} \in \mathbb{R}^{n_u} \mid \mathbf{u}^{\{L\}} = \Phi (\boldsymbol{\upeta}^{\{0\}}),~ \boldsymbol{\upeta}^{\{0\}} \in \mathcal{H}\right\}.
	\end{align}
\end{definition}

The exact output set of an MLP is extremely difficult to obtain due to the complexity of neural networks. We often resort to compute an over-approximation of $\mathcal{U}$ which would be more feasible and practical.

\begin{definition}
	Given the output set $\mathcal{U}$ of MLP (\ref{NN}), if there exist a set $\mathcal{U}_e$ such as $\mathcal{U}\subseteq
	\mathcal{U}_e$ holds, then $\mathcal{U}_e$ is an output reachable set estimation of MLP (\ref{NN}).
\end{definition}

The first key issue that needs to be addressed in this paper is the reachable set estimation for MLP in the form of (\ref{NN}), which is summarized as follows.

\begin{problem}\label{problem1}
	Given an MLP in the form of (\ref{NN}) and a bounded set $\mathcal{H}$ as input set,  how does one compute a set $\mathcal{U}_e$ such that $\mathcal{U} \subseteq \mathcal{U}_e$ holds and moreover,
	the set $\mathcal{U}_e$ is required to be as small as possible\footnote{For a set $\mathcal{U}$, its over-approximation $\mathcal{U}_{e,1}$ is smaller than another over-approximation ${\mathcal{U}}_{e,2}$ if $d_{\mathrm{H}}({\mathcal{U}}_{e,1},\mathcal{U}) < d_{\mathrm{H}}({\mathcal{U}}_{e,2},\mathcal{U})$ holds, where $d_H$ stands for the Hausdorff distance.}?
\end{problem} 

Then, let us consider the neural network control system (\ref{sam_closed_loop}). The state trajectory of the closed-loop system (\ref{sam_closed_loop}) from a single initial state $\mathbf{x}_0$ is denoted by $\mathbf{x}(t; \mathbf{x}_0,\mathbf{v}(\cdot))$, where $t \in \mathbb{R}_{\ge 0}$ is the time and $\mathbf{v}(\cdot)$ stands for the input trajectory. With an initial set and input set, the reachable set for the closed-loop system (\ref{sam_closed_loop}) is given as follows.
\begin{definition}
	Given a neural network control system in the form of (\ref{sam_closed_loop}), an initial set $\mathcal{X}_0$ and an input set $\mathcal{V}$, the reachable set at  time instant $t$ is defined by
	\begin{equation}
	\mathcal{R}(t) =\left\{\mathbf{x}(t;\mathbf{x}_0,\mathbf{v}(\cdot))\in\mathbb{R}^{n_x} \mid \mathbf{x}_0 \in \mathcal{X}_0, \mathbf{v}(t) \in \mathcal{V}\right\}
	\end{equation}
	and the reachable set of system (\ref{sam_closed_loop}) over time interval $[t_0,t_f]$ is defined by
	\begin{equation}
	\mathcal{R}([t_0,t_f]) = \bigcup\nolimits_{t \in [t_0,t_f]}\mathcal{R}(t) .
	\end{equation}
\end{definition}

Similarly, for most of the system classes, the exact reachable set cannot be computed. Instead, we resort to derive over-approximations for the purpose of safety verification. 

\begin{definition}\label{def2}
	Given system (\ref{sam_closed_loop}) and its reachable set  $\mathcal{R}(t)$, $\mathcal{R}_e(t)$ is an over-approximation of $\mathcal{R}(t)$
	at time $t$ if $\mathcal{R}(t) \subseteq \mathcal{R}_e(t)$ holds. Moreover, 
	$\mathcal{R}_e([t_0,t_f]) = \bigcup\nolimits_{t \in[t_0,t_f]} \mathcal{R}_e(t)$ is an
	over-approximation of $\mathcal{R}([t_0,t_f])$ over interval $[t_0, t_f]$.
\end{definition}

The main problem, the reachable set estimation problem for neural network control system (\ref{sam_closed_loop}), is 
summarized as below.
\begin{problem}\label{problem2}
	Given closed-loop system (\ref{sam_closed_loop}), a bounded initial set $\mathcal{X}_0$ and an input set $\mathcal{V}$, how does one find a set $\mathcal{R}_e(t)$ such that $\mathcal{R}(t) \subseteq \mathcal{R}_e(t)$ holds?
\end{problem}

Based on the reachable set estimation of neural network control systems, the safety verification for such dynamical systems can be performed. The safety specification is defined by a set the state space, which describes the safety requirement for the system.
\begin{definition}
	Given neural network control system (\ref{sam_closed_loop}) and a safety specification set $\mathcal{S}$ which formalizes the safety requirements. The closed-loop system (\ref{sam_closed_loop}) is safe during time interval $[t_0, t_f ]$ if and only if the following condition holds:
	\begin{equation}\label{verification}
	\mathcal{R}([t_0,t_f])  \cap \neg \mathcal{S} = \emptyset
	\end{equation}
	where $\neg$ is the logical negation symbol.
\end{definition}

Therefore, the safety verification problem for neural network control system (\ref{sam_closed_loop}) is as follows.
\begin{problem}\label{problem3}
	Given a neural network control system in the form of (\ref{sam_closed_loop}), a bounded initial set $\mathcal{X}_0$, an input set $\mathcal{V}$ and a
	safety specification set $\mathcal{S}$, how does one examine if the safety requirement  (\ref{verification}) holds?
\end{problem}

Before ending this section, a useful lemma is presented, which implies that the safety verification of neural network control system (\ref{sam_closed_loop}) can be relaxed with the help of the reachable set estimation $\mathcal{R}_e$, instead of using the exact reachable set $\mathcal{R}$. 

\begin{lemma}\label{lemma1}
	Given a neural network control system in the form of (\ref{sam_closed_loop}) and a safety specification set $\mathcal{S}$, the closed-loop system (\ref{sam_closed_loop}) is safe over time interval $[t_0,t_f]$ if the following condition holds
	\begin{equation}\label{lemma1_1}
	\mathcal{R}_e([t_0,t_f]) \cap \neg \mathcal{S} = \emptyset
	\end{equation}
	where $\mathcal{R}([t_0,t_f])  \subseteq\mathcal{R}_e([t_0,t_f])$.
\end{lemma}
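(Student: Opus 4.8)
The plan is to argue directly from the set-theoretic containment $\mathcal{R}([t_0,t_f]) \subseteq \mathcal{R}_e([t_0,t_f])$ together with the definition of safety in equation (\ref{verification}). The statement is essentially a monotonicity observation: intersecting a smaller set with $\neg\mathcal{S}$ cannot produce more points than intersecting a larger set with $\neg\mathcal{S}$, so emptiness of the latter intersection forces emptiness of the former.

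Concretely, I would proceed in three short steps. First, I would invoke the hypothesis $\mathcal{R}([t_0,t_f]) \subseteq \mathcal{R}_e([t_0,t_f])$, which is supplied as part of the lemma (and is guaranteed by Definition \ref{def2} whenever $\mathcal{R}_e$ is constructed as an over-approximation). Second, from this containment I would deduce $\mathcal{R}([t_0,t_f]) \cap \neg\mathcal{S} \subseteq \mathcal{R}_e([t_0,t_f]) \cap \neg\mathcal{S}$, using only the elementary fact that $A \subseteq B$ implies $A \cap C \subseteq B \cap C$ for any set $C$ (here $C = \neg\mathcal{S}$). Third, I would combine this with the assumed condition (\ref{lemma1_1}), namely $\mathcal{R}_e([t_0,t_f]) \cap \neg\mathcal{S} = \emptyset$, to conclude that $\mathcal{R}([t_0,t_f]) \cap \neg\mathcal{S} = \emptyset$, since a subset of the empty set is empty. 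This is exactly condition (\ref{verification}), so by the definition of safety the closed-loop system (\ref{sam_closed_loop}) is safe over $[t_0,t_f]$, completing the argument.

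There is no real obstacle here; the lemma is a routine relaxation statement whose only content is that over-approximations are sound for safety verification. The one point worth stating explicitly for clarity is why the containment $\mathcal{R}([t_0,t_f]) \subseteq \mathcal{R}_e([t_0,t_f])$ holds in the first place — it follows by taking the union over $t \in [t_0,t_f]$ of the pointwise inclusions $\mathcal{R}(t) \subseteq \mathcal{R}_e(t)$ from Definition \ref{def2} — but since the lemma already lists this containment among its hypotheses, the proof can simply cite it and chain the two displayed set relations.
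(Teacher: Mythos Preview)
Your proposal is correct and follows exactly the same approach as the paper's proof: both use the containment $\mathcal{R}([t_0,t_f]) \subseteq \mathcal{R}_e([t_0,t_f])$ to deduce that emptiness of $\mathcal{R}_e([t_0,t_f]) \cap \neg\mathcal{S}$ forces emptiness of $\mathcal{R}([t_0,t_f]) \cap \neg\mathcal{S}$. The paper simply states this implication in one sentence, whereas you spell out the intermediate monotonicity step $A \subseteq B \Rightarrow A \cap C \subseteq B \cap C$; the content is identical.
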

\begin{proof}
	Because of $\mathcal{R}([t_0,t_f])  \subseteq\mathcal{R}_e([t_0,t_f])$, condition (\ref{lemma1_1}) implies $\mathcal{R}([t_0,t_f]) \cap\neg \mathcal{S} = \emptyset$.
	The proof is complete.
\end{proof}

Lemma \ref{lemma1} implies that the over-approximated reachable set $\mathcal{R}_e([t_0,t_f])$ is qualified for the safety verification over interval $[t_0,t_f]$. The three linked problems are the main concerns to be addressed in the rest of the paper. Essentially, the very first and basic problem is the Problem \ref{problem1}, namely finding efficient methods to estimate the output set of an MLP. In the remainder of this paper, attention is mainly devoted to give solutions for Problem \ref{problem1}, and then extend to solve Problems \ref{problem2} and \ref{problem3}.

\section{Simulation-Guided Reachability Analysis for Neural Networks}
In this section, we will first present a thorough interval analysis for MLPs, and then propose our key contribution, the simulation-guided reachable set estimation algorithm. In the end, a robotic arm model example is proposed to demonstrate our approach. 
\subsection{Preliminaries}
Let $[x] = [\underline{x}, \overline{x}]$, $[y] = [\underline{y},\overline{y}]$ be real compact intervals and $\circ$ denote one of the basic operations including addition,
subtraction, multiplication and division, respectively, for real numbers, that is $\circ \in \{+,-,\cdot, / \}$, where it is assumed that $0 \notin [b]$ in case of division. We define these operations for intervals $[x]$ and $[y]$ by $[x] \circ [y] = \{x \circ y \mid x \in [y],x\in [y]\}$. The width of an interval $[x]$ is defined and denoted by $w([x]) = \overline{x} - \underline{x}$. The set of compact intervals in $\mathbb{R}$ is denoted by $\mathbb{IR}$. We say $[\phi]: \mathbb{IR} \to \mathbb{IR}$ is an interval extension of function $\phi: \mathbb{R} \to \mathbb{R}$, if for any degenerate interval arguments, $[\phi]$ agrees with $\phi$ such that $[\phi]([x,x]) = \phi(x)$. In order to consider multidimensional problems where $\mathbf{x} \in \mathbb{R}^{n}$ is taken into account, we denote $[\mathbf{x}] =[\underline{x}_1,\overline{x}_1]\times\cdots \times[\underline{x}_n,\overline{x}_n] \in \mathbb{IR}^{n}$, where $\mathbb{IR}^n$ denotes the set of compact interval in $\mathbb{R}^n$. The width of an interval vector $\mathbf{x}$ is the largest of the widths of any
of its component intervals $w([\mathbf{x}])= \max_{i=1,\ldots,n} (\overline{x}_i-\underline{x}_i)$. A mapping $[\Phi] : \mathbb{IR}^{n} \to \mathbb{IR}^{m}$ denotes the interval extension of a  function $\Phi:\mathbb{R}^{n} \to  \mathbb{R}^m$. An interval extension is inclusion monotonic if, for any $[\mathbf{x}_1],[\mathbf{x}_2] \in \mathbb{IR}^{n}$, $[\mathbf{x}_1] \subseteq [\mathbf{x}_2]$ implies $[\Phi]([\mathbf{x}_1]) \subseteq [\Phi]([\mathbf{x}_2])$. A fundamental property of inclusion monotonic interval extensions is that $\mathbf{x} \in [\mathbf{x}] \Rightarrow \Phi(\mathbf{x}) \in [\Phi]([\mathbf{x}])$, which means the value of $\Phi$ is contained in the interval $[\Phi]([\mathbf{x}])$ for every $\mathbf{x}$ in $[\mathbf{x}]$.

Several useful definitions and lemmas are presented.

\begin{definition} \emph{\cite{moore2009introduction}}
	Piecewise monotone functions, including absolute value, exponential, logarithm, rational power,   and trigonometric functions, are standard functions.
\end{definition}

\begin{lemma} \label{lemma2}\emph{\cite{moore2009introduction}}
	A function $\Phi$ composed by finitely many standard functions with elementary operations $\{+,-,\cdot, / \}$ is inclusion monotone.
\end{lemma}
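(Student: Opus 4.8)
The plan is to argue by structural induction on the way $\Phi$ is assembled from standard functions and the elementary operations $\{+,-,\cdot,/\}$. Writing the natural interval extension $[\Phi]$ as the expression obtained by replacing every variable with its interval argument, every arithmetic operation with the corresponding interval operation, and every standard function with its interval extension, it suffices to establish two facts: (i) each atomic building block --- an elementary interval operation and the interval extension of a standard function --- is inclusion monotonic; and (ii) inclusion monotonicity is preserved under composition and under combining subexpressions by an inclusion monotonic binary operation. Inclusion monotonicity of $[\Phi]$ then follows by induction on the size of the expression tree, the leaves being constants and projection maps $[\mathbf{x}] \mapsto [x_i]$, which are trivially inclusion monotonic.

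For the inductive step (ii): if $[f]$ and $[g]$ are inclusion monotonic and $[\mathbf{x}_1] \subseteq [\mathbf{x}_2]$, then $[g]([\mathbf{x}_1]) \subseteq [g]([\mathbf{x}_2])$ and hence $[f]([g]([\mathbf{x}_1])) \subseteq [f]([g]([\mathbf{x}_2]))$, so the composition is inclusion monotonic; likewise, if $\circ \in \{+,-,\cdot,/\}$ and $[x_1] \subseteq [x_2]$, $[y_1] \subseteq [y_2]$, then $[x_1] \circ [y_1] \subseteq [x_2] \circ [y_2]$ because, directly from the definition $[x] \circ [y] = \{x \circ y \mid x \in [x],\, y \in [y]\}$, enlarging either argument set can only enlarge the resulting set (the caveat $0 \notin [y]$ for division is a domain condition assumed to hold throughout the expression). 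Combining these two observations, any node of the expression tree that is inclusion monotonic in each of its children outputs an inclusion monotonic map, which is precisely what the induction requires.

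The real work is the base case (i) for the standard functions, and this is where I expect the main obstacle. Here I would use the hypothesis that each such $\phi$ (absolute value, $\exp$, $\log$, rational power, trigonometric functions) is \emph{piecewise monotone}: its domain decomposes into finitely many intervals on each of which $\phi$ is monotone. Given an interval $[x]$, I would split $[x]$ at the finitely many turning points of $\phi$ lying inside it; on each resulting subinterval $\phi$ is monotone, so its range is the interval with endpoints $\phi$ evaluated at that subinterval's endpoints, and $[\phi]([x])$ is defined as the smallest interval enclosing the union of these pieces. This makes $[\phi]([x])$ equal to the \emph{exact range} $\{\phi(x) \mid x \in [x]\}$, and the exact-range (united-extension) map is trivially inclusion monotonic, since $[x_1] \subseteq [x_2]$ gives $\{\phi(x) \mid x \in [x_1]\} \subseteq \{\phi(x) \mid x \in [x_2]\}$. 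Hence every standard function has an inclusion monotonic interval extension, completing the base case and, together with (ii), the induction. The obstacle is thus the bookkeeping in this base case --- verifying that the piecewise-monotone splitting indeed yields the sharp range for each listed family and that only finitely many turning points fall inside any compact interval --- rather than anything in the inductive combination step.
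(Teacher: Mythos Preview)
Your proposal is correct and is essentially the standard structural-induction argument from interval analysis (as in Moore et al.'s textbook). Note, however, that the paper does not actually prove this lemma: it is stated as a cited result from \cite{moore2009introduction} with no accompanying proof, so there is no ``paper's own proof'' to compare against beyond the reference itself.
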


\begin{definition} \emph{\cite{moore2009introduction}}
	Given an interval extension $[\Phi]([\mathbf{x}])$, if there is a constant $\xi$
	such that $w([\Phi]([\mathbf{x}]))\le \xi w([\mathbf{x}])$ for every $[\mathbf{x}] \subseteq [\mathbf{x}_0]$, then$[\Phi]([\mathbf{x}])$ is said to be Lipschitz in $[\mathbf{x}_0]$ .
\end{definition}

\begin{lemma}\label{lemma3}\emph{\cite{moore2009introduction}}
	If a function $\Phi(\mathbf{x})$ satisfies a Lipschitz condition in $[\mathbf{x_0}]$,
	\begin{equation}
	\left\|\Phi(\mathbf{x}_2)-\Phi(\mathbf{x}_1)\right\| \le \xi\left\|\mathbf{x}_2-\mathbf{x}_1\right\|,~\mathbf{x}_1,\mathbf{x}_2 \in [\mathbf{x}_0]
	\end{equation}
	then the interval extension $[\Phi]([\mathbf{x}])$ is a Lipschitz interval extension in $[\mathbf{x}_0]$, 
	\begin{equation}
	w([\Phi]([\mathbf{x}]))\le \xi w([\mathbf{x}]),~[\mathbf{x}] \subseteq [\mathbf{x_0}].
	\end{equation}
\end{lemma}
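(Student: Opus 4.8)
The plan is to peel back the definitions of the width of an interval vector and of the interval extension $[\Phi]$, and to reduce the claimed inequality $w([\Phi]([\mathbf{x}])) \le \xi\, w([\mathbf{x}])$ to a single pointwise application of the Lipschitz hypothesis. Here $[\Phi]([\mathbf{x}])$ is understood as the tightest interval enclosure (the interval hull) of the range $\{\Phi(\mathbf{x}) : \mathbf{x}\in[\mathbf{x}]\}$; since the Lipschitz bound forces $\Phi$ to be continuous on $[\mathbf{x}_0]$ and $[\mathbf{x}]\subseteq[\mathbf{x}_0]$ is compact and connected, this hull is a genuine interval vector whose endpoints are attained by points of $[\mathbf{x}]$. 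I would make this interpretation explicit at the outset, since the generic definition of an ``interval extension'' given just above only constrains the value on degenerate intervals.

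The core argument is short. Fix $[\mathbf{x}]\subseteq[\mathbf{x}_0]$ and write $[\Phi]([\mathbf{x}])$ componentwise as $[\underline{\Phi}_1,\overline{\Phi}_1]\times\cdots\times[\underline{\Phi}_m,\overline{\Phi}_m]$. Let $j$ be a coordinate realizing $w([\Phi]([\mathbf{x}])) = \overline{\Phi}_j - \underline{\Phi}_j$, and choose $\mathbf{x}_1,\mathbf{x}_2\in[\mathbf{x}]$ with $\Phi_j(\mathbf{x}_1) = \underline{\Phi}_j$ and $\Phi_j(\mathbf{x}_2) = \overline{\Phi}_j$. Then $w([\Phi]([\mathbf{x}])) = \Phi_j(\mathbf{x}_2) - \Phi_j(\mathbf{x}_1) \le \|\Phi(\mathbf{x}_2)-\Phi(\mathbf{x}_1)\|$. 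Because $\mathbf{x}_1,\mathbf{x}_2 \in [\mathbf{x}]\subseteq[\mathbf{x}_0]$, the hypothesis applies and gives $\|\Phi(\mathbf{x}_2)-\Phi(\mathbf{x}_1)\| \le \xi\|\mathbf{x}_2-\mathbf{x}_1\|$. Finally, since $\mathbf{x}_1$ and $\mathbf{x}_2$ lie in the same box $[\mathbf{x}]$, each coordinate difference $|x_{2,i}-x_{1,i}|$ is at most the edge length $\overline{x}_i-\underline{x}_i$, hence at most $w([\mathbf{x}])$, so $\|\mathbf{x}_2-\mathbf{x}_1\| \le w([\mathbf{x}])$. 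Chaining the three inequalities yields $w([\Phi]([\mathbf{x}])) \le \xi\, w([\mathbf{x}])$, which by the preceding definition is exactly the assertion that $[\Phi]$ is Lipschitz in $[\mathbf{x}_0]$.

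The only delicate point — the step I expect to need the most care — is the bookkeeping over norms: the hypothesis is stated with an unspecified $\|\cdot\|$, whereas $w([\mathbf{x}])$ is defined componentwise as the largest edge length, so I must ensure that both $\Phi_j(\mathbf{x}_2)-\Phi_j(\mathbf{x}_1) \le \|\Phi(\mathbf{x}_2)-\Phi(\mathbf{x}_1)\|$ and $\|\mathbf{x}_2-\mathbf{x}_1\| \le w([\mathbf{x}])$ hold for the norm actually in use. Taking $\|\cdot\| = \|\cdot\|_\infty$ throughout makes both steps exact and is consistent with the given definition of $w$; if a different norm is intended, I would invoke equivalence of norms on $\mathbb{R}^n$ and $\mathbb{R}^m$ to absorb the resulting constant into $\xi$, or simply note that for scalar output ($m=1$) the range is literally an interval and the component-extraction step disappears entirely. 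A minor point I would also spell out is why the range endpoints are attained — continuity (a consequence of the Lipschitz bound) together with compactness of $[\mathbf{x}]$ — with connectedness of $[\mathbf{x}]$ ensuring the range is an interval, so that ``the interval extension'' and ``the range'' have the same width.
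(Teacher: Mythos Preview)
The paper does not give its own proof of this lemma; it is simply quoted from Moore et al.'s interval-analysis text, so there is nothing to compare your argument against.

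That said, your proof is correct and is essentially the standard one for this fact: realize the width of $[\Phi]([\mathbf{x}])$ at a single output coordinate, pick two points of $[\mathbf{x}]$ attaining the extremes there (continuity from the Lipschitz hypothesis plus compactness of the box), apply the Lipschitz bound, and finish by noting that two points in the same box differ in $\|\cdot\|_\infty$ by at most $w([\mathbf{x}])$. Your reading of $\|\cdot\|$ as $\|\cdot\|_\infty$ is the right one here and matches the componentwise definition of $w$ used in the paper.

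Your caveat about the meaning of ``the interval extension'' is well taken and worth keeping: the paper's stated definition of an interval extension (agreement on degenerate inputs only) is far too weak for the conclusion to hold in general, so the lemma must be read as a statement about the united extension --- the interval hull of the range --- exactly as you interpret it. That is also how Moore states it.
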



\begin{assumption}\label{assumption_0}
	The activation function $\phi$ considered in this paper is composed by standard functions with finitely many elementary operations. 
\end{assumption}

The above assumption allows that the reachability analysis of MLP can be conducted in the framework of interval arithmetic, and to our knowledge, popular activation functions such as tanh, sigmoid, ReLU satisfy this assumption.  

\subsection{Interval Analysis}

First, we consider a single layer $\mathbf{u} = \phi(\mathbf{W}\boldsymbol{\upeta}+\boldsymbol{\uptheta})
$. Given an interval input $[\boldsymbol{\upeta}]$, the interval extension is $[\phi](\mathbf{W}[\boldsymbol{\upeta}]+\boldsymbol{\uptheta}) = [\underline{u}_1,\overline{u}_1]\times\cdots\times[\underline{u}_n,\overline{u}_n] = [\mathbf{u}]$, where 
\begin{align}
\underline{u}_i &= \min_{\boldsymbol{\upeta} \in [\boldsymbol{\upeta}]} \phi\left(\sum\nolimits_{j=1}^{n}\omega_{ij} \eta_j + \theta_i\right)\label{thm1_1}
\\
\overline u_i &= \max_{\boldsymbol{\upeta} \in [\boldsymbol{\upeta}]} \phi\left(\sum\nolimits_{j=1}^{n}\omega_{ij} \eta_j + \theta_i\right) . \label{thm1_2}
\end{align} 

According to (\ref{thm1_1}) and (\ref{thm1_2}), the minimum and maximum values of the output of nonlinear function $\phi$ is required to compute the interval extension $[\phi]$. However, the optimization problems are still challenging for general nonlinear functions. We propose the following monotonic assumption for activation functions.

\begin{assumption}\label{assumption_1}
	For any two scalars $\eta_1 \le \eta_2$, the activation function satisfies $\phi(\eta_1) \le \phi(\eta_2)$. 
\end{assumption}

It worth mentioning that Assumption \ref{assumption_1} can be satisfied by a variety of activation functions such as logistic, tanh,  ReLU, all satisfy Assumption \ref{assumption_1}. Taking advantage of the monotonic property of $\phi$, we have  interval extension $[\phi]([\eta]) = [\phi(\underline{\eta}),\phi(\overline{\eta})]$. Therefore, $\underline{u}_i$ and $\overline{u}_i$ in (\ref{thm1_1}) and (\ref{thm1_2}) can be explicitly written out  as
\begin{align} \label{y_1}
\underline{u}_i & = \phi\left( \sum\nolimits_{j=1}^{n}\underline{p}_{ij} +   \theta_i\right)
\\
\overline{u}_i &= \phi\left( \sum\nolimits_{j=1}^{n}\overline{p}_{ij} +   \theta_i \right) \label{y_2}
\end{align}
with $\underline{p}_{ij}$ and $\overline{p}_{ij}$ defined by
\begin{align} \label{y_3}
\underline{p}_{ij}  &= \left\{ {\begin{array}{*{20}l}
	{\omega _{ij} \underline{\eta}_j,} & {\omega _{ij}\geq 0}  \\
	{\omega _{ij} \overline \eta_j ,} & {\omega _{ij}  < 0}  \\
	\end{array} } \right.
\\
\overline p_{ij}& = \left\{ {\begin{array}{*{20}c}
	{\omega _{ij} \overline \eta_j ,} & {\omega _{ij}  \geq 0}  \\
	{\omega _{ij} \underline{\eta}_j ,} & {\omega _{ij}  < 0}  \\
	\end{array} } \right.. \label{y_4}
\end{align}

From (\ref{y_1})--(\ref{y_4}), the output interval of a single layer can be efficiently computed with these explicit expressions. Then, we consider the MLP $\mathbf{u}^{\{L\}}=\Phi(\boldsymbol{\upeta}^{\{0\}})$ with multiple layers, the interval extension $[\Phi ]([\boldsymbol{\upeta}^{\{ 0\} } ])$ can be computed by the following layer-by-layer computation. 
\begin{theorem}\label{thm2}
	Given an MLP in the form of (\ref{NN}) with activation functions satisfying Assumption \ref{assumption_1} and an interval input $[\boldsymbol{\upeta}^{\{0\}}]$, an interval extension can be determined by
	\begin{equation} \label{thm2_1}
	[\Phi ]([\boldsymbol{\upeta}^{\{ 0\} } ]) = [\hat \phi _L ] \circ  \cdots  \circ [\hat \phi _1 ] \circ [\hat \phi _0 ]([\boldsymbol{\upeta}^{\{ 0\} } ])
	\end{equation}
	where $[\hat \phi_{\ell}]([\boldsymbol{\upeta}^{\{\ell\}}]) =[\phi_{\ell} ](\mathbf{W}^{\{\ell\}} [\boldsymbol{\upeta}^{\{\ell\}} ] + \boldsymbol{\uptheta}^{\{\ell\}}  )=[\mathbf{u}^{\{\ell\}}]$ in which
	\begin{align} \label{thm2_2}
	\underline{u}_i^{\{\ell\}} & = \phi_{\ell}\left( \sum\nolimits_{j=1}^{n^{\{\ell\}}}\underline{p}_{ij}^{\{\ell\}} +   \theta_i^{\{\ell\}} \right)
	\\
	\overline{u}_i^{\{\ell\}} &= \phi_{\ell}\left(\sum\nolimits_{j=1}^{n^{\{\ell\}}}\overline{p}_{ij}^{\{\ell\}} +   \theta_i^{\{\ell\}} \right) \label{thm2_3}
	\end{align}
	with $\underline{p}_{ij}^{\{\ell\}}$ and $\overline{p}_{ij}^{\{\ell\}}$ defined by
	\begin{align} \label{thm2_4}
	\underline{p}_{ij}^{\{\ell\}}  &= \left\{ {\begin{array}{*{20}l}
		{\omega _{ij}^{\{\ell\}} \underline{\eta}_j^{\{\ell\}},} & {\omega _{ij}^{\{\ell\}}\geq 0}  \\
		{\omega _{ij}^{\{\ell\}} \overline \eta_j^{\{\ell\}} ,} & {\omega _{ij}^{\{\ell\}}  < 0}  \\
		\end{array} } \right.
	\\
	\overline p_{ij}^{\{\ell\}}& = \left\{ {\begin{array}{*{20}c}
		{\omega _{ij}^{\{\ell\}} \overline \eta_j^{\{\ell\}} ,} & {\omega _{ij}^{\{\ell\}}  \geq 0}  \\
		{\omega _{ij}^{\{\ell\}} \underline{\eta}_j^{\{\ell\}} ,} & {\omega _{ij}^{\{\ell\}}  < 0}  \\
		\end{array} } \right.. \label{thm2_5}
	\end{align}
\end{theorem}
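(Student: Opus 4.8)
The plan is to reduce everything to the single-layer situation already anticipated in $(\ref{y_1})$--$(\ref{y_4})$ and then lift it to the whole network by composing interval extensions layer by layer. Fix a layer $\ell$ and consider the map $g_\ell:\boldsymbol{\upeta}^{\{\ell\}}\mapsto\phi_\ell(\mathbf{W}^{\{\ell\}}\boldsymbol{\upeta}^{\{\ell\}}+\boldsymbol{\uptheta}^{\{\ell\}})$ restricted to the box $[\boldsymbol{\upeta}^{\{\ell\}}]$; I would analyze it componentwise. The $i$-th pre-activation $z_i=\sum_{j}\omega_{ij}^{\{\ell\}}\eta_j^{\{\ell\}}+\theta_i^{\{\ell\}}$ is affine, and over the box the coordinates $\eta_j^{\{\ell\}}$ vary independently within $[\underline{\eta}_j^{\{\ell\}},\overline{\eta}_j^{\{\ell\}}]$, so each summand $\omega_{ij}^{\{\ell\}}\eta_j^{\{\ell\}}$ is separately minimized (resp.\ maximized) by taking $\eta_j^{\{\ell\}}=\underline{\eta}_j^{\{\ell\}}$ or $\overline{\eta}_j^{\{\ell\}}$ according to the sign of $\omega_{ij}^{\{\ell\}}$. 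This is precisely the definition of $\underline{p}_{ij}^{\{\ell\}}$ and $\overline{p}_{ij}^{\{\ell\}}$ in $(\ref{thm2_4})$--$(\ref{thm2_5})$, and it shows that, as $\boldsymbol{\upeta}^{\{\ell\}}$ ranges over $[\boldsymbol{\upeta}^{\{\ell\}}]$, $z_i$ ranges exactly over the interval with endpoints $\sum_j\underline{p}_{ij}^{\{\ell\}}+\theta_i^{\{\ell\}}$ and $\sum_j\overline{p}_{ij}^{\{\ell\}}+\theta_i^{\{\ell\}}$. Applying Assumption $\ref{assumption_1}$, which makes $\phi_\ell$ nondecreasing, the image of the $i$-th output coordinate is then the interval in $(\ref{thm2_2})$--$(\ref{thm2_3})$. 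Hence the componentwise box $[\mathbf{u}^{\{\ell\}}]$ coincides with the natural interval extension $[\phi_\ell](\mathbf{W}^{\{\ell\}}[\boldsymbol{\upeta}^{\{\ell\}}]+\boldsymbol{\uptheta}^{\{\ell\}})$ and is the tightest box containing the image $g_\ell([\boldsymbol{\upeta}^{\{\ell\}}])$ (tightest because the outputs of distinct neurons are correlated through their shared inputs).

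To lift this to the full network I would first record that each $[\hat\phi_\ell]$ is an inclusion-monotone interval extension of $g_\ell$: the map $g_\ell$ is assembled from the coordinate projections and the bias constants using only the elementary operations $\{+,\cdot\}$ followed by $\phi_\ell$, which by Assumption $\ref{assumption_0}$ is a finite composition of standard functions, so Lemma $\ref{lemma2}$ applies; agreement on degenerate intervals is immediate, since a point input forces $\underline{p}_{ij}^{\{\ell\}}=\overline{p}_{ij}^{\{\ell\}}$ and $[\hat\phi_\ell]$ then returns the exact value $g_\ell(\boldsymbol{\upeta}^{\{\ell\}})$. Since $\Phi=g_L\circ\cdots\circ g_1$ (with the input layer $\ell=0$ acting as the identity pass-through, so $[\hat\phi_0]$ is the identity on $[\boldsymbol{\upeta}^{\{0\}}]$), and since the fundamental property $\mathbf{x}\in[\mathbf{x}]\Rightarrow g_\ell(\mathbf{x})\in[\hat\phi_\ell]([\mathbf{x}])$ holds at every stage, a straightforward induction on $\ell$ gives $(\ref{thm2_1})$: if the computed box $[\mathbf{u}^{\{\ell\}}]$ contains the image of $g_\ell\circ\cdots\circ g_0$, then $[\mathbf{u}^{\{\ell+1\}}]=[\hat\phi_{\ell+1}]([\mathbf{u}^{\{\ell\}}])\supseteq g_{\ell+1}([\mathbf{u}^{\{\ell\}}])\supseteq$ the image of $g_{\ell+1}\circ\cdots\circ g_0$; at $\ell=L$ this is exactly the assertion that $[\Phi]([\boldsymbol{\upeta}^{\{0\}}])=[\hat\phi_L]\circ\cdots\circ[\hat\phi_0]([\boldsymbol{\upeta}^{\{0\}}])$ is an interval extension of $\Phi$, with the stagewise recursions $(\ref{thm2_2})$--$(\ref{thm2_5})$ read off from the single-layer analysis applied with $[\boldsymbol{\upeta}^{\{\ell\}}]=[\mathbf{u}^{\{\ell-1\}}]$.

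I do not expect a deep obstacle: the argument is essentially bookkeeping once the single-layer case is pinned down. The one step that needs genuine care is that single-layer case --- specifically the claim that the affine pre-activation attains its extrema at the sign-dependent corners of the box --- which rests on the separability of the affine form over the product structure of the box and must be combined cleanly with the monotonicity of $\phi_\ell$ so that the one-dimensional output intervals come out exact. It is also worth stating explicitly that the product box $[\mathbf{u}^{\{\ell\}}]$ is, in general, strictly larger than the true image $g_\ell([\boldsymbol{\upeta}^{\{\ell\}}])$ because the inter-neuron correlations through shared inputs are discarded; this is the expected conservatism of interval arithmetic and is exactly why $(\ref{thm2_1})$ yields an over-approximation rather than the exact output set of the MLP.
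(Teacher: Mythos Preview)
Your proposal is correct and follows essentially the same route as the paper: reduce to the single-layer case via the sign-dependent corner argument combined with monotonicity of $\phi_\ell$ (the paper states this as $(\ref{y_1})$--$(\ref{y_4})$), then compose layer by layer using $\boldsymbol{\upeta}^{\{\ell\}}=\hat\phi_{\ell-1}(\boldsymbol{\upeta}^{\{\ell-1\}})$. The paper's own proof is only two sentences and simply points back to $(\ref{y_1})$--$(\ref{y_4})$, so your write-up is considerably more detailed; the only wording to tighten is the parenthetical after ``tightest box'' --- the box is tightest because each coordinate range is computed exactly, while the inter-neuron correlations are the reason the true image is generally a \emph{proper} subset of that box, not the reason the box is tight.
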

\begin{proof}
	We denote $\hat\phi_{\ell}(\boldsymbol{\upeta}^{\{\ell\}}) = \phi_{\ell} (\mathbf{W}^{\{\ell\}} \boldsymbol{\upeta}^{\{\ell\}}  + \boldsymbol{\uptheta}^{\{\ell\}}  )$. Given an MLP, it essentially has $\boldsymbol{\upeta}^{\{\ell\}}=\hat\phi_{\ell-1}(\boldsymbol{\upeta}^{\{\ell-1\}})$, $\ell=1,\ldots,L$ which leads to (\ref{thm2_1}). Then, for each layer, the interval extension $[\mathbf{u}^{\{\ell\}}]$ computed by (\ref{thm2_2})--(\ref{thm2_5}) can be obtained by (\ref{y_1})--(\ref{y_4}). 
\end{proof}

According to the explicit expressions (\ref{thm2_1})--(\ref{thm2_5}), the computation on interval extension  $[\Phi]$ can be performed in a fast manner. In the next step, we should discuss the conservativeness for the computation outcome of (\ref{thm2_1})--(\ref{thm2_5}). 

\begin{theorem}\label{thm1}
	The interval extension $[\Phi ]$ of neural network $\Phi$ composed by activation functions satisfying Assumption \ref{assumption_0} is inclusion monotonic and Lipschitz such that 
	\begin{equation}\label{L_NN}
	w([\Phi]([\boldsymbol{\upeta}]))\le \xi^{L}\prod\nolimits_{\ell = 1}^L {\left\| {\mathbf{W}^{\{ \ell\} } } \right\|}  w([\boldsymbol{\upeta}]),~[\boldsymbol{\upeta}] \subseteq \mathbb{IR}^{n^{\{0\}}}
	\end{equation}
	where $\xi$ is a Lipschitz constant for  activation functions in $\Phi$. 
\end{theorem}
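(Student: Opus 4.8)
The plan is to read $\Phi$ as the composition of the layer maps $\hat\phi_\ell(\boldsymbol{\upeta}) = \phi_\ell(\mathbf{W}^{\{\ell\}}\boldsymbol{\upeta} + \boldsymbol{\uptheta}^{\{\ell\}})$, so that, by Theorem \ref{thm2}, the interval extension factorizes as $[\Phi] = [\hat\phi_L]\circ\cdots\circ[\hat\phi_1]$ (the layer $\ell=0$ map being the identity on the input box). I would then prove the two asserted properties one at a time: inclusion monotonicity of each $[\hat\phi_\ell]$ (which propagates through the composition), and a per-layer width estimate (which telescopes into (\ref{L_NN})).

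For inclusion monotonicity, under Assumption \ref{assumption_0} each activation $\phi_\ell$ is built from standard functions with finitely many elementary operations $\{+,-,\cdot,/\}$; since the affine map $\boldsymbol{\upeta}\mapsto \mathbf{W}^{\{\ell\}}\boldsymbol{\upeta}+\boldsymbol{\uptheta}^{\{\ell\}}$ is of the same form, so is $\hat\phi_\ell$, and Lemma \ref{lemma2} gives that $[\hat\phi_\ell]$ is inclusion monotone. I would then observe that a composition of inclusion monotone interval extensions is again inclusion monotone: if $[\boldsymbol{\upeta}_1]\subseteq[\boldsymbol{\upeta}_2]$, applying $[\hat\phi_1],[\hat\phi_2],\ldots,[\hat\phi_L]$ in turn preserves the inclusion at every stage, hence $[\Phi]([\boldsymbol{\upeta}_1])\subseteq[\Phi]([\boldsymbol{\upeta}_2])$.

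For the Lipschitz bound I would estimate the width added by each layer. The natural interval arithmetic on the affine part gives $w(\mathbf{W}^{\{\ell\}}[\boldsymbol{\upeta}]+\boldsymbol{\uptheta}^{\{\ell\}}) = \max_i \sum_j |\omega_{ij}^{\{\ell\}}|\,w([\eta_j]) \le \|\mathbf{W}^{\{\ell\}}\|\,w([\boldsymbol{\upeta}])$, where $\|\cdot\|$ is the matrix norm induced by the max-norm on $\mathbb{R}^{n}$ that underlies the width $w(\cdot)$. Next, a scalar $\xi$-Lipschitz $\phi_\ell$ acting componentwise is $\xi$-Lipschitz in that max-norm, so Lemma \ref{lemma3} yields $w([\phi_\ell]([\mathbf{z}]))\le \xi\,w([\mathbf{z}])$. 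Composing the two estimates gives $w([\hat\phi_\ell]([\boldsymbol{\upeta}])) \le \xi\|\mathbf{W}^{\{\ell\}}\|\,w([\boldsymbol{\upeta}])$. Writing $[\mathbf{u}^{\{\ell\}}]=[\hat\phi_\ell]([\mathbf{u}^{\{\ell-1\}}])$ with $[\mathbf{u}^{\{0\}}]=[\boldsymbol{\upeta}]$ and inducting on $\ell=1,\ldots,L$ produces $w([\mathbf{u}^{\{L\}}]) \le \prod_{\ell=1}^{L}\bigl(\xi\|\mathbf{W}^{\{\ell\}}\|\bigr)\,w([\boldsymbol{\upeta}]) = \xi^{L}\prod_{\ell=1}^{L}\|\mathbf{W}^{\{\ell\}}\|\,w([\boldsymbol{\upeta}])$, which is exactly (\ref{L_NN}).

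The main obstacle is not any single hard computation but the bookkeeping at the interface between the scalar and vector levels: one must check that the componentwise activation keeps the scalar Lipschitz constant $\xi$ under the very norm used to define interval-vector width, and that the $\|\mathbf{W}^{\{\ell\}}\|$ appearing in the bound is the operator norm induced by that same norm, so that Lemmas \ref{lemma2} and \ref{lemma3} apply cleanly to $\hat\phi_\ell$. Once the layer decomposition is fixed and this norm consistency is settled, both the inclusion-monotonicity argument and the telescoping product are routine.
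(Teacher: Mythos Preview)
Your proposal is correct and uses the same layer decomposition $\hat\phi_\ell(\boldsymbol{\upeta})=\phi_\ell(\mathbf{W}^{\{\ell\}}\boldsymbol{\upeta}+\boldsymbol{\uptheta}^{\{\ell\}})$ and the same two lemmas as the paper, but the order in which you pass from functions to interval extensions is different. The paper first shows, at the \emph{function} level, that each $\hat\phi_\ell$ is Lipschitz with constant $\xi\|\mathbf{W}^{\{\ell\}}\|$, composes these to conclude that $\Phi$ itself is Lipschitz with constant $\xi^{L}\prod_{\ell=1}^{L}\|\mathbf{W}^{\{\ell\}}\|$, and then invokes Lemma~\ref{lemma3} \emph{once} to transfer this constant to $[\Phi]$. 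You instead work at the \emph{interval} level throughout: you bound $w(\mathbf{W}^{\{\ell\}}[\boldsymbol{\upeta}]+\boldsymbol{\uptheta}^{\{\ell\}})$ directly from interval arithmetic, apply Lemma~\ref{lemma3} per layer to the scalar activation, and telescope the resulting width inequalities. Both routes are valid; the paper's is shorter, while yours is more explicit about the norm compatibility issue you flag (that $w(\cdot)$ is the max-norm width and $\|\mathbf{W}^{\{\ell\}}\|$ must be the induced operator norm), a point the paper leaves implicit.
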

\begin{proof}
	Under Assumption \ref{assumption_0}, the inclusion monotonicity can be obtained directly based on Lemma \ref{lemma2}. Then,  we denote $\hat\phi_{\ell}(\boldsymbol{\upeta}^{\{\ell\}}) = \phi_{\ell} (\mathbf{W}^{\{\ell\}}\boldsymbol{\upeta}^{\{\ell\}}  + \boldsymbol{\uptheta}^{\{\ell\}}  )$. For any $\boldsymbol{\upeta}_1,\boldsymbol{\upeta}_2$, it has
	\begin{align*}
	\left\| {\hat \phi _{\ell} (\boldsymbol{\upeta}_2^{\{ \ell\} } ) - \hat \phi _{\ell} (\boldsymbol{\upeta}_1^{\{ \ell\} } )} \right\| \leq \xi \left\| {\mathbf{W}^{\{ \ell\} } \boldsymbol{\upeta}_2^{\{ \ell\} }  - \mathbf{W}\boldsymbol{\upeta}_1^{\{ \ell\} } } \right\| \nonumber
	\\
	\leq \xi \left\| {\mathbf{W}^{\{ \ell\} } } \right\|\left\| {\boldsymbol{\upeta}_2^{\{ \ell\} }  - \boldsymbol{\upeta}_1^{\{ \ell\} } } \right\|.
	\end{align*}
	
	Due to $\boldsymbol{\upeta}^{\{\ell\}}=\hat\phi_{\ell-1}(\boldsymbol{\upeta}^{\{\ell-1\}})$, $\ell=1,\ldots,L$,  $\xi^{L}\prod\nolimits_{\ell = 1}^L {\left\| {\mathbf{W}^{\{ \ell\} } } \right\|}$ becomes the Lipschitz constant for $\Phi$, and (\ref{L_NN}) can be established by Lemma \ref{lemma3}. 
\end{proof}

We denote the set image for neural network $\Phi$ as follows
\begin{equation}
\Phi([\boldsymbol{\upeta}^{\{0\}}])=\{\Phi(\boldsymbol{\upeta}^{\{0\}}):\boldsymbol{\upeta}^{\{0\}} \in [\boldsymbol{\upeta}^{\{0\}}]\}.
\end{equation}

Since $[\Phi]$ is inclusion monotonic according to Theorem \ref{thm1}, one has $\Phi([\boldsymbol{\upeta}^{\{0\}}]) \subseteq [\Phi]([\boldsymbol{\upeta}^{\{0\}}])$.   We
have $[\Phi]([\boldsymbol{\upeta}^{\{0\}}]) = \Phi([\boldsymbol{\upeta}^{\{0\}}]) + E([\boldsymbol{\upeta}^{\{0\}}])$ for some interval-valued function $E([\boldsymbol{\upeta}^{\{0\}}])$ such that $w([\Phi]([\boldsymbol{\upeta}^{\{0\}}])) = w(\Phi([\boldsymbol{\upeta}^{\{0\}}])) + w(E([\boldsymbol{\upeta}^{\{0\}}]))$. 

\begin{definition}
	$w(E([\boldsymbol{\upeta}^{\{0\}}])) = w([\Phi]([\boldsymbol{\upeta}^{\{0\}}])) - w(\Phi([\boldsymbol{\upeta}^{\{0\}}])) $
	is the excess width of interval extension of neural network $\Phi([\boldsymbol{\upeta}^{\{0\}}])$.
\end{definition}

Explicitly, the excess width measures the conservativeness of interval extension $[\Phi]$ regarding its corresponding function $\Phi$. The following theorem gives the upper bound of the excess width $w(E([\boldsymbol{\upeta}^{\{0\}}]))$. 

\begin{theorem}\label{thm3}
	Given an MLP in the form of (\ref{NN}) with an interval input $[\boldsymbol{\upeta}^{\{0\}}]$, the excess width $w(E([\boldsymbol{\upeta}^{\{0\}}]))$ satisfies
	\begin{equation}\label{thm3_1}
	w(E([\boldsymbol{\upeta}^{\{ 0\} } ])) \leq\gamma w([\boldsymbol{\upeta}^{\{0\}} ])
	\end{equation}
	where $\gamma = \xi^{L}\prod\nolimits_{\ell = 1}^L {\left\| {\mathbf{W}^{\{ \ell\} } } \right\|} $.
\end{theorem}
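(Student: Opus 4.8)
The plan is to obtain Theorem~\ref{thm3} as an immediate corollary of Theorem~\ref{thm1}, using only the elementary fact that the true image set has nonnegative width. First I would unfold the definition of the excess width,
\begin{equation}
w(E([\boldsymbol{\upeta}^{\{0\}}])) = w([\Phi]([\boldsymbol{\upeta}^{\{0\}}])) - w(\Phi([\boldsymbol{\upeta}^{\{0\}}])),
\end{equation}
and observe that $\Phi([\boldsymbol{\upeta}^{\{0\}}])$ is a nonempty bounded set (the image of the nonempty box $[\boldsymbol{\upeta}^{\{0\}}]$ under the continuous map $\Phi$), so that, reading $w(\cdot)$ as the width of the smallest enclosing interval vector, $w(\Phi([\boldsymbol{\upeta}^{\{0\}}])) \ge 0$. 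Hence $w(E([\boldsymbol{\upeta}^{\{0\}}])) \le w([\Phi]([\boldsymbol{\upeta}^{\{0\}}]))$.

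Next I would invoke Theorem~\ref{thm1}: under Assumption~\ref{assumption_0} the interval extension $[\Phi]$ is Lipschitz and satisfies $w([\Phi]([\boldsymbol{\upeta}^{\{0\}}])) \le \xi^{L}\prod\nolimits_{\ell=1}^{L}\left\|\mathbf{W}^{\{\ell\}}\right\| w([\boldsymbol{\upeta}^{\{0\}}])$. Since the right-hand side is precisely $\gamma\, w([\boldsymbol{\upeta}^{\{0\}}])$, chaining this with the previous inequality gives $w(E([\boldsymbol{\upeta}^{\{0\}}])) \le \gamma\, w([\boldsymbol{\upeta}^{\{0\}}])$, which is (\ref{thm3_1}).

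The step that deserves the most care — though it is routine rather than a genuine obstacle — is the claim $w(\Phi([\boldsymbol{\upeta}^{\{0\}}])) \le w([\Phi]([\boldsymbol{\upeta}^{\{0\}}]))$, i.e. that passing from the exact image to its interval over-approximation cannot decrease the width. This follows from the containment $\Phi([\boldsymbol{\upeta}^{\{0\}}]) \subseteq [\Phi]([\boldsymbol{\upeta}^{\{0\}}])$ (already recorded just before the theorem as a consequence of inclusion monotonicity), since the smallest box enclosing $\Phi([\boldsymbol{\upeta}^{\{0\}}])$ must then be contained in the box $[\Phi]([\boldsymbol{\upeta}^{\{0\}}])$ and is therefore no wider. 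A more refined alternative would be to propagate an excess-width estimate layer by layer, applying the Lipschitz interval-extension bound of Lemma~\ref{lemma3} to each $[\hat\phi_\ell]$ and accumulating the contributions; but since the crude bound above already matches the constant $\gamma$ claimed in the statement, the short argument is enough.
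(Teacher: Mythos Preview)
Your proposal is correct and follows essentially the same route as the paper's proof: both drop the nonnegative term $w(\Phi([\boldsymbol{\upeta}^{\{0\}}]))$ from the definition of the excess width and then apply the Lipschitz bound $w([\Phi]([\boldsymbol{\upeta}^{\{0\}}]))\le \xi^{L}\prod_{\ell=1}^{L}\|\mathbf{W}^{\{\ell\}}\|\,w([\boldsymbol{\upeta}^{\{0\}}])$ from Theorem~\ref{thm1}. Your discussion of why $w(\Phi([\boldsymbol{\upeta}^{\{0\}}]))\ge 0$ (equivalently, why the containment $\Phi([\boldsymbol{\upeta}^{\{0\}}])\subseteq[\Phi]([\boldsymbol{\upeta}^{\{0\}}])$ forces the width inequality) is more explicit than the paper's, which simply drops the term without comment.
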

\begin{proof}
	We have $[\Phi]([\boldsymbol{\upeta}^{\{0\}}]) = \Phi([\boldsymbol{\upeta}^{\{0\}}]) + E([\boldsymbol{\upeta}^{\{0\}}])$ for some $E([\boldsymbol{\upeta}^{\{0\}}])$ and 
	\begin{align*}
	w(E([\boldsymbol{\upeta}^{\{0\}}])) &= w([\Phi]([\boldsymbol{\upeta}^{\{0\}}])) - w(\Phi([\boldsymbol{\upeta}^{\{0\}}]))
	\\
	&\leq w([\Phi ]([\boldsymbol{\upeta}^{\{ 0\} } ])) 
	\\
	& \leq \xi ^L \prod\nolimits_{\ell = 1}^L {\left\| {\mathbf{W}^{\{ \ell\} } } \right\|} w([\boldsymbol{\upeta}^{\{0\}} ])
	\end{align*}
	which means (\ref{thm3_1}) holds.
\end{proof}

Given a neural network $\Phi$ which means $\mathbf{W}^{\{\ell\}}$ and $\xi$ are fixed, Theorem \ref{thm3} implies that a less conservative result can be only obtained by reducing the width of input interval $[\boldsymbol{\upeta}^{\{0\}}]$. On the other hand, a smaller $w([\boldsymbol{\upeta}^{\{0\}}])$ means more subdivisions of an input interval which will bring more computational cost. Therefore, how to generate appropriate subdivisions of an input interval is the key issue for reachability analysis of neural networks in the framework of interval arithmetic. In the next section, an efficient simulation-guided method is proposed to address this key problem.

\subsection{Simulation-Guided Reachability Analysis}

Inspired by the Moore-Skelboe algorithm \cite{skelboe1974computation}, we propose a reachable set computation algorithm under guidance of a finite number of simulations. It proposes an adaptive input interval partitioning scheme with the help of simulations. The simulation-guided algorithm shown in Algorithm \ref{alg1} checks the emptiness of the intersection between the computed output set and the over-approximation interval for simulations, within a pre-defined tolerance $\varepsilon$.  This algorithm is able to avoid unnecessary partition for the input interval to get a tight output range. The tightness of reachable set estimation is accomplished by dividing and checking the initial input interval into increasingly smaller sub-intervals, as seen in Algorithm \ref{alg1}.

\begin{itemize}
	\item \textbf{Initialization.}  Perform $N$ simulations for neural network $\Phi$ to get $N$ output points $\mathbf{u}_{\mathrm{sim},n}$, $n = 1,\ldots,N$ and compute an interval $[\mathbf{u}_{\mathrm{sim}}]$ such that $\mathbf{u}_{\mathrm{sim},n} \in [\mathbf{u}_{\mathrm{sim}}]$, $\forall n $. The $N$ simulations can be generated either randomly or by gridding input set. Since our approach is based on interval analysis, convert input set $\mathcal{H}$ to an interval $[\boldsymbol{\upeta}]$ such that $\mathcal{H} \subseteq [\boldsymbol{\upeta}]$. Compute the initial output interval $[\mathbf{u}] = [\Phi]([\boldsymbol{\upeta}])$ by (\ref{thm2_1})-(\ref{thm2_5}). Initialize set $\mathcal{M} = \{([\boldsymbol{\upeta}],[\mathbf{u}])\}$. Set a tolerance $\varepsilon>0$, which will be used to terminate algorithm.    
	\item \textbf{Simulation-guided bisection.}
	This is the key step in the algorithm. 
	Select an element $([\boldsymbol{\upeta}],[\mathbf{u}])$ for simulation-guided bisection. If the output interval $[\mathbf{u}]$ satisfies $[\mathbf{u}] \subseteq [\mathbf{u}_{\mathrm{sim}}]$, we can discard this sub-interval for the subsequent dividing and checking since it has been proven to be included in the output range. Otherwise, the bisection action will be activated to produce finer subdivisions to be added to $\mathcal{M}$ for subsequent checking. The bisection process is guided by simulations, since the bisection actions are totally determined by the non-emptiness of the intersection between output interval sets and the interval for simulations. This distinguishing feature leads to finer subdivisions when the output set is getting close to boundary of interval for simulations, and on the other hand coarse subdivisions are sufficient for interval reachability analysis when the output set is included in the interval for simulations. Therefore, unnecessary computational cost can be avoided. 
	\item \textbf{Termination.} The simulation-guided bisection  continues until the width of subdivisions becomes less than the pre-defined tolerance $\varepsilon$. Generally, a smaller tolerance $\varepsilon$ will lead to a tighter output interval computation result. 
\end{itemize}  

\begin{algorithm}[ht!]
	\SetAlgoLined
	\SetKwInOut{Input}{Input}
	\SetKwInOut{Output}{Output}
	\SetKw{Return}{return}
	\Input{Feedforward neural network $\Phi$
		; Input set $\mathcal{H}$; 
		Tolerance $\varepsilon$; Number of simulations $N$.}
	\Output{Output set estimation $\mathcal{U}_e$.}
	\Fn{\texttt{reachMLP}}{
		\tcc{Initialization}
		Compute interval $[\boldsymbol{\upeta}]$ such that $\mathcal{H} \subseteq [\boldsymbol{\upeta}]$ \;
		$[\mathbf{u}] \gets [\Phi]([\boldsymbol{\upeta}])$  \tcp*{Using (\ref{thm2_1})-(\ref{thm2_5})}
		$\mathcal{M} \gets \{([\boldsymbol{\upeta}],[\mathbf{u}])\}$ \;
		Compute $N$ simulations $\mathbf{u}_{\mathrm{sim},n} = \Phi(\boldsymbol{\upeta}_{\mathrm{sim},n})$, $n = 1,\ldots,N$ \; 
		Compute interval $[\mathbf{u}_{\mathrm{sim}}]$ such that $\mathbf{u}_{\mathrm{sim},n} \in [\mathbf{u}_{\mathrm{sim}}]$, $\forall n$ \;
		$\mathcal{U}_e \gets \emptyset$ \;
		\tcc{Simulation-guided bisection}
		\While{$\mathcal{M} \neq \emptyset$}{
			Select and remove an element $([\boldsymbol{\upeta}],[\mathbf{u}])$ from $\mathcal{M}$\;
			\eIf{$[\mathbf{u}]\subseteq  [\mathbf{u}_{\mathrm{sim}}]$}{
				$\mathcal{U}_e \gets \mathcal{U}_e \cup [\mathbf{u}]$ \;
				Continue \;
			}
			{\eIf{$w([\mathbf{\boldsymbol{\upeta}}]) > \varepsilon$}{Bisect $[\boldsymbol{\upeta}]$ to obtain $[\boldsymbol{\upeta}_1]$ and $[\boldsymbol{\upeta}_2]$ \;
					$[\mathbf{u}_1] \gets [\Phi]([\boldsymbol{\upeta}_1])$ \tcp*{Using (\ref{thm2_1})-(\ref{thm2_5})} $[\mathbf{u}_2] \gets [\Phi]([\boldsymbol{\upeta}_2])$ 
					\tcp*{Using (\ref{thm2_1})-(\ref{thm2_5})}
					$\mathcal{M} \gets \mathcal{M} \cup \{([\mathbf{u}_1],[\boldsymbol{\upeta}_1])\}\cup \{([\mathbf{u}_2],[\boldsymbol{\upeta}_2])\}$ \;}
				{Break \tcp*{Bisection terminates}}
			}
		}
		\Return{$\mathcal{U}_e \gets \mathcal{U}_e\cup\left(\bigcup_{\{[\boldsymbol{\upeta}],[\mathbf{u}]\} \in \mathcal{M}}[\mathbf{u}]\right)$}  }
	\caption{Simulation-Guided Reachable Set Estimation} \label{alg1}
\end{algorithm}
In the next subsection, we will use an illustrative example to show the advantages of the proposed simulation-guided approach. 

\subsection{Reachability Analysis of a Robotic Arm Model}
\begin{figure}
	\centering
	\includegraphics[width=3.5cm]{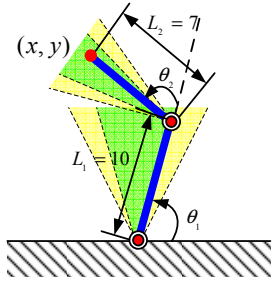}
	\caption{Robotic arm with two joints. The normal working zone of $(\theta_1,\theta_2)$ is colored in green $\theta_1,\theta_2 \in [\frac{5\pi}{12},\frac{7\pi}{12}]$. The buffering zone is in yellow $\theta_1,\theta_2 \in [\frac{\pi}{3},\frac{5\pi}{12}] \cup [\frac{7\pi}{12},\frac{2\pi}{3}] $. The forbidden zone is $\theta_1,\theta_2 \in [0,\frac{\pi}{3}] \cup [\frac{2\pi}{3},2\pi] $.}
	\label{robot_arm} 
\end{figure}
In \cite{xiang2018output}, a  \emph{learning forward kinematics} of a robotic arm model with two joints is proposed, shown in Fig. \ref{robot_arm}. 
The learning task is using a feedforward neural network to predict the position $(x,y)$ of the end with knowing the joint angles $(\theta_1,\theta_2)$. The input space $[0,2\pi]\times [0,2\pi]$ for $(\theta_1,\theta_2)$  is classified into three zones for its operations: Normal working zone $\theta_1,\theta_2 \in [\frac{5\pi}{12},\frac{7\pi}{12}]$, buffering zone $\theta_1,\theta_2 \in [\frac{\pi}{3},\frac{5\pi}{12}] \cup [\frac{7\pi}{12},\frac{2\pi}{3}] $ and forbidden zone $\theta_1,\theta_2 \in [0,\frac{\pi}{3}] \cup [\frac{2\pi}{3},2\pi]$. The detailed formulation for this robotic arm model and neural network training can be found in \cite{xiang2018output}. 

In \cite{xiang2018output}, a uniform partition of input interval which is the union of normal working  and buffering zones $(\theta_1,\theta_2) \in [\frac{\pi}{3},\frac{2\pi}{3}] \times [\frac{\pi}{3},\frac{2\pi}{3}]$, is used to compute an over-approximation for safety verification. The safety specification for the position $(x,y)$ is an interval set
$\mathcal{S}=\{(x,y)\mid -14 \le x\le 3~\mathrm{and}~1 \le y \le 17\}$. To illustrate the advantages of simulation-guided approach, we aim to compute a tight output interval using both uniform partition method in \cite{xiang2018output} and Algorithm \ref{alg1}. The precision/tolerance for both methods are chosen the same, $\varepsilon = 0.01$. The number of simulations used in Algorithm \ref{alg1} is set to be $1000$. The computed output ranges are shown in Figs. \ref{robot_arm_sim} and \ref{robot_arm_uniform}. It can be clearly observed that two methods can produce same output range estimations, that is $\mathcal{U}_e = \{(x,y)\mid -12.0258 \le x\le 1.1173~\mathrm{and}~2.8432 \le y \le 14.8902\}$ which is sufficient to ensure the safety due to $\mathcal{U}_e \subseteq\mathcal{S}$. Though both methods can achieve same output range analysis results, the computation costs are significantly different as shown in Table \ref{tab1}. In \cite{xiang2018output}, a uniform partition for input space is used, and it results in 16384 intervals with precision $\varepsilon = 0.01$ and the computation  takes $4.4254$ seconds. Using our simulation-guided approach, the safety can be guaranteed by partitioning the input space into  397 intervals ($2.42\%$ of those by uniform partition method in \cite{xiang2018output}) with tolerance $\varepsilon = 0.01$. The simulation-guided partition of the input interval $[\frac{\pi}{3},\frac{2\pi}{3}] \times [\frac{\pi}{3},\frac{2\pi}{3}]$ is shown in Fig. \ref{robot_arm_input}. Along with the less number of intervals involved in the computation process, the computational time is  0.1423 seconds ($3.22\%$ of that by uniform partition method in \cite{xiang2018output}) for simulation-guided approach\footnote{The  source code is available at:
	https://github.com/xiangweiming/ignnv}.  
\begin{table}
	\centering
	\caption{Comparison on number of intervals and computational time between simulation-guided method and uniform partitioning method. }\label{tab1}
	\begin{tabular}{c||c||c}
		\hline
		& Intervals 	& Computational Time   \\
		\hline
		Algorithm \ref{alg1}   &  $397$ &  $0.1423$ seconds \\
		\hline
		Xiang \textit{et al.} 2018 \cite{xiang2018output} & $16384$ & $4.4254$ seconds\\
		\hline
	\end{tabular}
\end{table} 
\begin{figure}[ht!]
	\includegraphics[width=8cm]{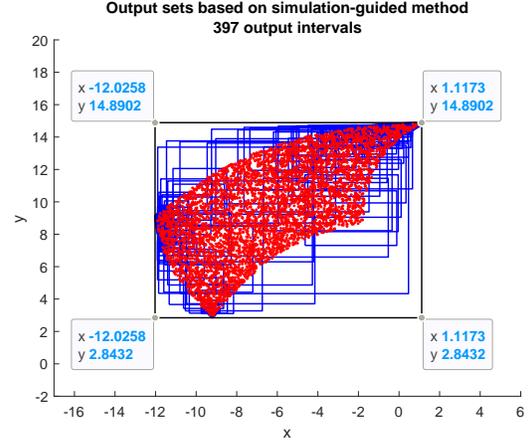}
	\caption{Output intervals obtained by simulation-guided methods. 397 output intervals (blue rectangles) are generated and the output range is $\mathcal{U}_e = \{(x,y)\mid -12.0258 \le x\le 1.1173~\mathrm{and}~2.8432 \le y \le 14.8902\}$ (black rectangle). Red points are 5000 random outputs which are all included in output intervals.}
	\label{robot_arm_sim} 
\end{figure}
\begin{figure}[ht!]
	\includegraphics[width=8cm]{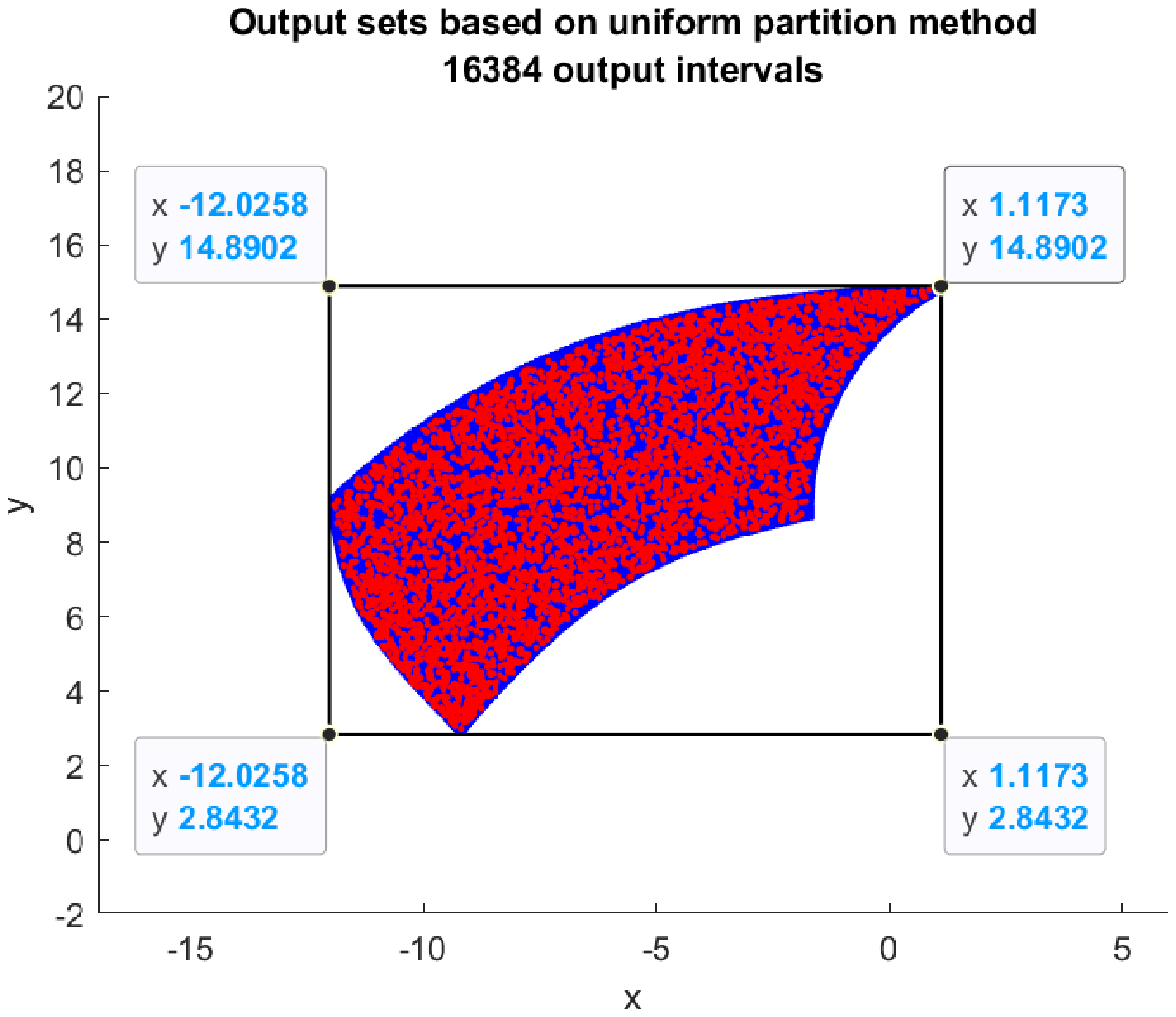}
	\caption{Output intervals obtained by uniform partition method in \cite{xiang2018output}. 16384 output intervals (blue rectangles) are generated and the output range is $\mathcal{U}_e = \{(x,y)\mid -12.0258 \le x\le 1.1173~\mathrm{and}~2.8432 \le y \le 14.8902\}$ (black rectangle). Red points are 5000 random outputs which are all included in output intervals.}
	\label{robot_arm_uniform} 
\end{figure}

\begin{figure}[ht!]
	\includegraphics[width=8cm]{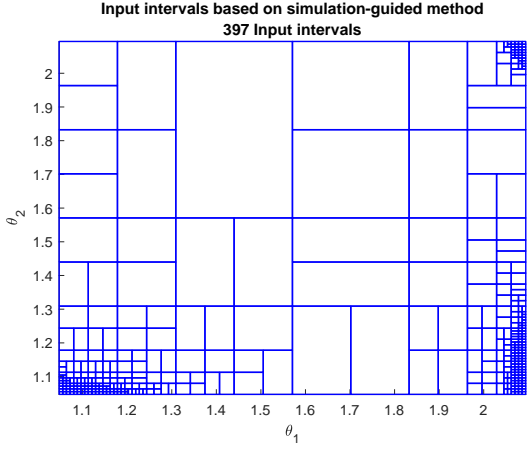}
	\caption{Simulation-guided bisections of input interval by Algorithm \ref{alg1}. Guided by the outputs of simulations, finer partitions are generated when the output intervals are close to boundary of the interval of simulations, and coarse partitions are generated when the output intervals are in the interval of simulations.  }
	\label{robot_arm_input} 
\end{figure}

\section{Reachability Analysis for Neural Network Control Systems}
In this section, we will present the reachability analysis of neural network control systems by incorporating the simulation-guided method with the reachability analysis of ODE models. 
\subsection{Reachability Analysis}
This section presents the reachability analysis and safety verification results for neural network control systems. The developed algorithm combines the aforementioned output set computation result for MLPs and existing reachable set estimation methods for ODE models.

The reachable set estimation for a sampled-data neural network control system in the form of (\ref{sam_closed_loop}) involves two essential portions. First, an over-approximation of the output set of the underlying neural network controllers is supposed to be computed in the employment of the aforementioned output set computation result of neural networks, the Algorithm \ref{alg1}. Then,  the reachable set and output set of the controlled plant (\ref{system}) needs to be computed accordingly.  There are a variety of existing approaches and tools for reachable set computation of systems modeled by ODEs such as those well-developed in \cite{althoff2015introduction,frehse2011spaceex,chen2013flow,duggirala2015c2e2,bak2017hylaa}. Due to the existence of those reachable set estimation of ODE models, we shall not develop new methods or tools for ODE models. We use the following functions to denote the reachable set estimation that is obtained by using reachable set computation tools for sampled data ODE models during $[t_k,t_{k+1}]$
\begin{align}
\mathcal{R}_e([t_{k},t_{k+1}]) & = \texttt{reachODEx}(f,\mathcal{U}(t_k),\mathcal{R}_e(t_k))
\\
\mathcal{Y}_e(t_k) & = \texttt{reachODEy}(h,\mathcal{R}_e(t_k))
\end{align}
where $\mathcal{U}(t_k)$ is the input set for sampling interval $[t_k,t_{k+1}]$. $\mathcal{R}_e(t_{k})$ and $\mathcal{R}_e([t_{k},t_{k+1}])$ are the estimated reachable sets for state $\mathbf{x}(t)$ at sampling instant $t_k$ and interval $[t_k,t_{k+1}]$, respectively. $\mathcal{Y}_e(t_k)$ is the estimated reachable set for output $\mathbf{y}(t_k)$. 

Combining $\texttt{reachODEx}$, $\texttt{reachODEy}$ with $\texttt{reachMLP}$ proposed by Algorithms \ref{alg1}, an over-approximation of the reachable set of a closed-loop system in the form of (\ref{sam_closed_loop}) can be obtained. The computation process is a recursive algorithm is summarized by Algorithm \ref{alg2} and Proposition \ref{pro1}. The general steps can be illustrated as below:
\begin{itemize}
	\item \textbf{Reachable Set Estimation of Neural Network Controller.} Compute the output reachable set estimation for the neural network controller using Algorithm \ref{alg1} at each beginning sampling instant $t_k$, by which an over-approximation of the output  set is obtained. 
	\item \textbf{Reachable Set Estimation of Plant.} As the output generated by the neural network controller holds its value unchanged in $[t_k,t_{k+1}]$, perform the reachable set estimation for the nonlinear continuous-time system using sophisticated methods or tools such as \cite{althoff2015introduction,frehse2011spaceex,chen2013flow,duggirala2015c2e2,bak2017hylaa}.
	\item \textbf{Return for Next Sampling Interval Computation.} Return to the first step of reachable set estimation of neural network controller for the next sampling period  $[t_{k+1},t_{k+2}]$.  
\end{itemize}

\begin{algorithm}[ht!]
	\SetAlgoLined
	\SetKwInOut{Input}{Input}
	\SetKwInOut{Output}{Output}
	\SetKw{Return}{return}
	\Input{System dynamics $f$, $h$; Feedforward neural network $\Phi$
		; Initial Set $\mathcal{X}_0$; Input set $\mathcal{V}$; 
		Tolerance $\varepsilon$; Number of simulations $N$; Sampling sequence $t_k$, $k = 0,1,\ldots,K$; Termination time $t_f$.}
	\Output{Reachable set estimation $\mathcal{R}_e([t_0,t_f])$.}
	\Fn{\texttt{reachNNCS}}{
		\tcc{Initialization}
		$k \gets 0$ \;
		$t_{K+1}\gets t_f$ \;
		$\mathcal{R}_e(t_0) \gets \mathcal{X}_0$ \;
		\tcc{Iteration for all sampling intervals}
		\While{$k \leq K$}{
			$\mathcal{Y}_e(t_k) \gets \texttt{reachODEy}(h,\mathcal{R}_e(t_k))$ \;
			$\mathcal{H} \gets \mathcal{Y}_e(t_k) \times \mathcal{V}$ \;
			$\mathcal{U}_e(t_k) \gets \texttt{reachMLF}(\Phi,\mathcal{H},\varepsilon,N)$ 
			\tcp*{Algorithm \ref{alg1}}
			$\mathcal{R}_e([t_k,t_{k+1}]) \gets \texttt{reachODEx}(f,\mathcal{U}_e,\mathcal{R}_e(t_k))$ \;
			$k \gets k+1$;
		}
		\Return{$\mathcal{R}_e([t_0,t_f]) \gets \bigcup_{k=0,1\ldots,K}\mathcal{R}_e([t_k,t_{k+1}])$}  
	}
	\caption{Reachable Set Estimation for Sampled-Data Neural Network Control Systems} \label{alg2}
\end{algorithm}

\begin{proposition}\label{pro1}
	Given a neural network control system in the form of (\ref{sam_closed_loop}), an initial set $\mathcal{X}_0$ and an input set $\mathcal{V}$, an estimated reachable set  $\mathcal{R}_e([t_0,t_f])$ can be obtained by by Algorithm \ref{alg2} such that  $\mathcal{R}([t_0,t_f])\subseteq \mathcal{R}_e([t_0,t_f])$, where $\mathcal{R}([t_0,t_f])$ is the reachable set of system (\ref{sam_closed_loop}).
\end{proposition}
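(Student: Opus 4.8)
The plan is to prove the claim by induction on the sampling index $k$, carrying the invariant $\mathcal{R}(t_k)\subseteq\mathcal{R}_e(t_k)$ through each iteration of the \texttt{while} loop in Algorithm~\ref{alg2} and then taking the union over $k$. The base case is immediate: Algorithm~\ref{alg2} initializes $\mathcal{R}_e(t_0)=\mathcal{X}_0$, and $\mathcal{R}(t_0)=\mathcal{X}_0$ by Definition~\ref{def2}, so $\mathcal{R}(t_0)\subseteq\mathcal{R}_e(t_0)$ trivially holds.

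For the inductive step, assume $\mathcal{R}(t_k)\subseteq\mathcal{R}_e(t_k)$. I would chase the subset relation through the four assignments of the loop body. First, since $\texttt{reachODEy}$ returns an over-approximation of $\{h(\mathbf{x}):\mathbf{x}\in\mathcal{R}_e(t_k)\}$, and this set contains the exact reachable output set $\{h(\mathbf{x}):\mathbf{x}\in\mathcal{R}(t_k)\}$ by the inductive hypothesis, the true set of output values at $t_k$ is contained in $\mathcal{Y}_e(t_k)$. Consequently $\mathcal{H}=\mathcal{Y}_e(t_k)\times\mathcal{V}$ contains the exact set of admissible controller inputs $\boldsymbol{\upeta}(t_k)=[\mathbf{y}^{\top}(t_k)~\mathbf{v}^{\top}(t_k)]^{\top}$. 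Next I would invoke soundness of \texttt{reachMLP}: Algorithm~\ref{alg1} returns a set $\mathcal{U}_e(t_k)$ with $\Phi(\mathcal{H})\subseteq\mathcal{U}_e(t_k)$. This rests on Theorem~\ref{thm1}, which gives inclusion monotonicity of $[\Phi]$ so that $\Phi([\boldsymbol{\upeta}])\subseteq[\Phi]([\boldsymbol{\upeta}])$ on every sub-interval produced during the run, together with the observations that bisection always splits an interval into two sub-intervals whose union equals the original, and that the simulation-guided discard step still unions the discarded $[\mathbf{u}]$ into $\mathcal{U}_e$. Hence no point of $\Phi(\mathcal{H})$ is ever lost, regardless of $N$ or of whether the loop exits through the tolerance test or through \texttt{Break}, so $\mathcal{U}_e(t_k)$ over-approximates the exact set of control values applied on $[t_k,t_{k+1})$. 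Finally, since $\texttt{reachODEx}$ is a sound over-approximate reachability routine for $\dot{\mathbf{x}}=f(\mathbf{x},\mathbf{u})$ with the constant-on-the-interval input set $\mathcal{U}_e(t_k)$ and initial set $\mathcal{R}_e(t_k)$, we obtain $\mathcal{R}([t_k,t_{k+1}])\subseteq\mathcal{R}_e([t_k,t_{k+1}])$; in particular the slice at $t_{k+1}$ yields $\mathcal{R}(t_{k+1})\subseteq\mathcal{R}_e(t_{k+1})$, closing the induction. Taking the union over $k=0,\ldots,K$ with $t_{K+1}=t_f$ then gives $\mathcal{R}([t_0,t_f])=\bigcup_k\mathcal{R}([t_k,t_{k+1}])\subseteq\bigcup_k\mathcal{R}_e([t_k,t_{k+1}])=\mathcal{R}_e([t_0,t_f])$, as claimed.

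I expect the main obstacle to be making the middle step — soundness of \texttt{reachMLP} — fully rigorous. One must argue that the adaptive, simulation-driven partitioning never produces an unsound output: the simulations influence only \emph{which} cells are refined, never \emph{whether} a cell's image is retained, so the returned set is always a (possibly loose) over-approximation; and one must also argue that the loop terminates, so that $\mathcal{U}_e$ is well defined, which follows from Theorem~\ref{thm3} since after finitely many bisections every remaining cell has width below $\varepsilon$. The remaining steps are routine propagation of set inclusions around the feedback loop, provided we take the soundness of the off-the-shelf ODE reachability tools \cite{althoff2015introduction,frehse2011spaceex,chen2013flow,duggirala2015c2e2,bak2017hylaa} as a standing hypothesis, which the paper does.
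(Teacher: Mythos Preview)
The paper states Proposition~\ref{pro1} without a formal proof: it is presented as an immediate consequence of the construction of Algorithm~\ref{alg2}, the soundness of Algorithm~\ref{alg1}, and the standing assumption that the external ODE reachability routines \texttt{reachODEx}, \texttt{reachODEy} return over-approximations. Your inductive argument is exactly the natural formalization of this implicit reasoning and is correct; in fact it is more careful than anything the paper provides, since you explicitly verify that Algorithm~\ref{alg1} never discards an output interval without unioning it into $\mathcal{U}_e$, and that the final \texttt{Return} captures whatever remains in $\mathcal{M}$ after a \texttt{Break}.

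One small inaccuracy: you appeal to Theorem~\ref{thm3} for termination of Algorithm~\ref{alg1}, but Theorem~\ref{thm3} bounds the excess width of $[\Phi]$, not the number of bisections. Termination follows more directly from the fact that each bisection halves the width along the widest coordinate, so after finitely many steps every remaining cell has $w([\boldsymbol{\upeta}])\le\varepsilon$ and the \texttt{Break} fires (or $\mathcal{M}$ empties first). This does not affect the soundness part of your argument, which is what Proposition~\ref{pro1} actually asserts.
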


The safety specification can be examined by checking the emptiness of the intersection between the proposed unsafe regions $\neg \mathcal{S}$ and the reachable set estimation outcome produced by Algorithm \ref{alg2}. According to Lemma \ref{lemma1}, the following result regarding the reachability-based safety verification can be obtained.

\begin{proposition}\label{pro2}
	Given a neural network control system in the form of (\ref{sam_closed_loop}) and a safety specification $\mathcal{S}$, if $ \mathcal{R}_e([t_0,t_f]) \cap \neg \mathcal{S} = \emptyset$, where $\mathcal{R}_e([t_0,t_f])$ is a reachable set estimation obtained by Algorithm \ref{alg2}, then the closed-loop system (\ref{sam_closed_loop}) is safe over time interval $[t_0,t_f]$.
\end{proposition}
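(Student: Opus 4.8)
The plan is to derive Proposition \ref{pro2} as a direct consequence of the two results already established, namely Proposition \ref{pro1} and Lemma \ref{lemma1}, so the argument is essentially a chaining of inclusions rather than a fresh computation. First I would invoke Proposition \ref{pro1}: since $\mathcal{R}_e([t_0,t_f])$ is produced by Algorithm \ref{alg2} from the given initial set $\mathcal{X}_0$ and input set $\mathcal{V}$, it is guaranteed that $\mathcal{R}([t_0,t_f]) \subseteq \mathcal{R}_e([t_0,t_f])$, where $\mathcal{R}([t_0,t_f])$ is the true reachable set of the closed-loop system (\ref{sam_closed_loop}). This supplies exactly the containment hypothesis that Lemma \ref{lemma1} requires.

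Next I would observe that the stated hypothesis of Proposition \ref{pro2}, $\mathcal{R}_e([t_0,t_f]) \cap \neg \mathcal{S} = \emptyset$, is literally condition (\ref{lemma1_1}) of Lemma \ref{lemma1}. Applying Lemma \ref{lemma1} with this over-approximation then yields directly that the closed-loop system (\ref{sam_closed_loop}) is safe over $[t_0,t_f]$ in the sense of (\ref{verification}); indeed, from $\mathcal{R}([t_0,t_f]) \subseteq \mathcal{R}_e([t_0,t_f])$ and $\mathcal{R}_e([t_0,t_f]) \cap \neg \mathcal{S} = \emptyset$ one concludes $\mathcal{R}([t_0,t_f]) \cap \neg \mathcal{S} = \emptyset$, which is precisely the definition of safety.

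The only place where any real care is needed — and thus the closest thing to an obstacle — is making sure the soundness claim of Proposition \ref{pro1} itself is not circular here: Proposition \ref{pro1} rests on the over-approximation guarantees of \texttt{reachMLP} (Algorithm \ref{alg1}, supported by Theorems \ref{thm1}--\ref{thm3} on inclusion monotonicity of the interval extension $[\Phi]$) together with the soundness of the \texttt{reachODEx}/\texttt{reachODEy} routines, and on the fact that the recursion over the sampling grid $t_0,\ldots,t_K$ with $t_{K+1} = t_f$ composes these per-step over-approximations correctly, so that $\bigcup_{k} \mathcal{R}_e([t_k,t_{k+1}])$ over-approximates $\bigcup_{k}\mathcal{R}([t_k,t_{k+1}]) = \mathcal{R}([t_0,t_f])$. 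Since Proposition \ref{pro1} is taken as already established, no further work is needed for Proposition \ref{pro2} beyond the two-line deduction above; the statement is a corollary, and the proof should simply cite Proposition \ref{pro1} for the set inclusion and Lemma \ref{lemma1} for the safety conclusion.
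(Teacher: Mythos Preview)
Your proposal is correct and matches the paper's own reasoning: the paper does not write out a separate proof but introduces Proposition~\ref{pro2} with the sentence ``According to Lemma~\ref{lemma1}, the following result regarding the reachability-based safety verification can be obtained,'' which is exactly your chain of using Proposition~\ref{pro1} to supply the inclusion $\mathcal{R}([t_0,t_f]) \subseteq \mathcal{R}_e([t_0,t_f])$ and then applying Lemma~\ref{lemma1}. Your additional paragraph unpacking the soundness of Proposition~\ref{pro1} is more detail than the paper provides, but it is accurate and not a deviation in approach.
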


\subsection{Safety Verification of Adaptive Cruise Control Systems}
\begin{figure}
	\includegraphics[width=8.5cm]{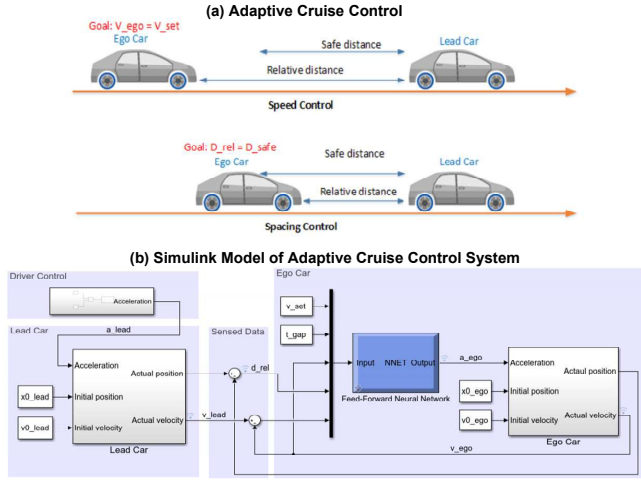}
	\caption{\boldmath  Illustration of adaptive cruise control systems and simulink block diagram of the closed-loop system.}
	\label{acc_sys} 
\end{figure}

In this section,  our approach will be evaluated by the safety verification of an adaptive cruise control (ACC) system equipped with a neural network controller as
depicted in Fig. \ref{acc_sys}. The ACC system consists of two cars, the ego car with ACC module that has a radar sensor to measure
the distance to the lead car which is denoted by $d_{\mathrm{rel}}$, and the relative velocity
against the lead car denoted by $v_{\mathrm{rel}}$. There are two system operating modes including speed control and spacing control. In speed control mode, the ego car travels at a speed $v_{\mathrm{set}}$. In spacing control mode, the ego car's safety control goal is to maintain a safe distance from the leading car, $d_{\mathrm{safe}}$. If $d_{\mathrm{rel}} \geq d_{\mathrm{safe}}$, then speed control mode is active. Otherwise, the spacing control mode is active. 
In summary, the system dynamics is  in the form of
\begin{align}\label{acc}
\left\{ {\begin{array}{*{20}l}
	\dot x_{l}(t) = v_{l}(t)\\
	\dot v_{l}(t) = \gamma_{l}(t)\\
	\dot \gamma_{l}(t) = -2\gamma_{l}(t) + 2 \alpha_{l}(t) - \mu v^2_{l}(t)
	\\
	\dot x_{e} = v_{e}(t)
	\\
	\dot v_{e}(t) = \gamma_e(t)
	\\
	\dot \gamma_{e}(t) = -2\gamma_{e}(t)+2\alpha_{e}(t)-\mu v^{2}_{e}(t)
	\end{array} } \right.
\end{align}
where $x_l$($x_e$), $v_l$($v_e$) and $\gamma_l$($\gamma_e$) are the position, velocity and actual acceleration of the lead (ego) car, respectively.  $\alpha_l$($\alpha_e$) is the acceleration
control input applied to the lead (ego) car, and $\mu = 0.001$ is the friction parameter. The ACC controller we considered here is a $2 \times 20$ feedforward neural network with $\texttt{tanh}$ as its activation functions. The sampling scheme is considered as a periodic sampling every 0.2 seconds, that is $t_{k+1}-t_k = 0.2$ seconds. 

\begin{figure}
	\includegraphics[width=9cm]{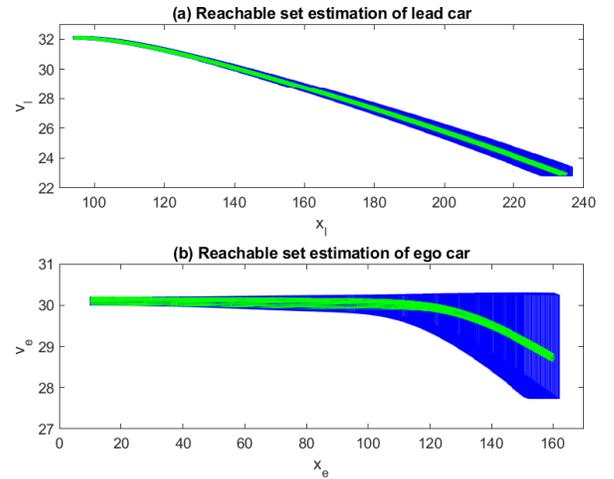}
	\caption{\boldmath Reachable set estimation for both lead car (a) and ego car (b). The over-approximation of reachable set (blue boxes) includes all 100 randomly generated system trajectories (green lines). }
	\label{acc_reach} 
\end{figure}

\begin{figure}
	\includegraphics[width=9cm]{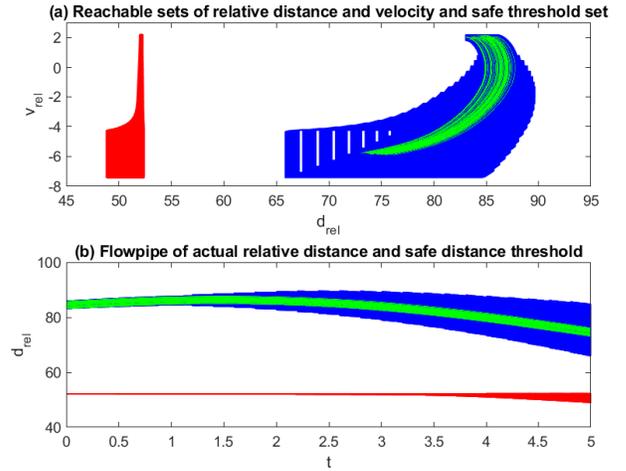}
	\caption{\boldmath Reachable set estimation for relative distance and velocity between lead and ego cars (blue boxes), there is no intersection between relative distance set and safe distance threshold (red boxes) in (a). In (b), the flowpipe of relative distance (blue) has no intersection with the safe distance threshold area (red) which also implies the safety of the ACC system. The green lines are 100 randomly generated system trajectories. }
	\label{acc_rel} 
\end{figure}

The inputs to the neural network ACC control module are:
\begin{itemize}
	\item Driver-set velocity $v_{\mathrm{set}}$;
	\item Time gap $t_{\mathrm{gap}}$;
	\item Velocity of the ego car $v_e$;
	\item Relative distance to the lead car $d_{\mathrm{rel}} = x_l - x_e$;
	\item Relative velocity to the lead car  $v_{\mathrm{rel}} = v_l - v_e$.
\end{itemize}

The output for the  neural network ACC controller is the acceleration of the ego car, $\alpha_e$. In summary, the sampled-data neural network controller for the acceleration control of the ego car is in the form of
\begin{align*}\label{sam_acc}
\alpha_{e}(t) = \Phi(v_{\mathrm{set}}(t_k),t_{\mathrm{gap}},v_{e}(t_k), d_{\mathrm{rel}}(t_k),&v_{\mathrm{rel}}(t_k)),
t \in [t_k,t_{k+1}].
\end{align*}

The threshold of the safe distance between the lead car and the ego car can be considered as a function of the ego car velocity $v_e$. The safety specification is defined as 
\begin{equation} \label{acc_safe}
d_{\mathrm{safe}} > d_{\mathrm{thold}} = d_{\mathrm{def}} + t_{\mathrm{gap}}\cdot v_{e}
\end{equation}
where $d_{\mathrm{def}}$ is the standstill default spacing and $t_{\mathrm{gap}}$ is the time gap between the vehicles.
The safety verification scenario  is that the lead car decelerates with $\alpha_l = -2$ to reduce its speed as an emergency braking occurs. We expect that the ego car is able to maintain a safe relative distance to the lead car to avoid collision. The safety specification is defined by (\ref{acc_safe}) with $t_{\mathrm{gap}} = 1.4$ seconds and $d_{\mathrm{def}}=10$. The time horizon that we want to verify is 5 seconds after the emergency braking comes into play. The initial intervals are $[x_l(0)] = [94, 96]$, $[v_l(0)] = [30, 30.2]$, $[\gamma_l(0)] = 0$, $[x_e(0)] = [10, 11]$, $[v_e(0)] = [30, 30.2]$, $[\gamma_e(0)] = 0$.

We apply Algorithm \ref{alg2} to perform the reachable set estimation for the closed-loop system. The tolerance is chosen as $\varepsilon = 0.1$ and number of simulations is $1 \times 10^5$. For this neural network controller, we use simulation-guided method to compute the output set of the control signal. Meanwhile, for the continuous-time nonlinear dynamics, we use CORA \cite{althoff2015introduction} to do the reachability analysis for the time interval between two sampling instants. The reachable set estimations for both lead car and ego car are shown in Fig. \ref{acc_reach}. In order to verify the safety property, we compute the reachable set estimation of relative distance based on the reachable sets of the lead car and ego car. In Fig. \ref{acc_rel}, the reachable set of relative distance does not violate the threshold of safe distance which is defined by (\ref{acc_safe}), so it can be concluded that the ACC system is safe during the time interval $[0,5]$ seconds in this safety verification scenario of interest\footnote{The  source code is available at:
	https://github.com/xiangweiming/ignnv}.  

\section{Conclusion}
This work investigated the reachable set estimation and safety verification problems for a class of neural network control systems which can be modeled as sampled data continuous-time dynamical systems. A novel simulation-guided approach is developed to soundly over-approximate the output set of a class of feedforward neural networks called MLP. Based on the interval analysis of neural networks and guidance of simulations generated from neural networks, the output reachable set can be efficiently over-approximated upon avoidance of unnecessary computation cost. Compared with the other simulation-based approach in \cite{xiang2018output}, the approach developed in this paper can reduce the computational cost significantly. In the robotic arm example, it only needs 3\% computational cost of the method in \cite{xiang2018output} for the same interval readability analysis results. 
Furthermore, in a combination of reachable set computation methods and tools for ODE models, a recursive algorithm is developed to perform reachable set estimation and safety verification of neural network control systems.  Beyond the initial results derived in this work,   other modeling and reachability analysis approaches for the plant and neural network controllers, as well as broader classes of neural networks, should be considered in the future study. For example, though our approach is general in the sense that it is not particularly designed for any specific activation functions, the simulation-guided idea has the potential to be further applied to other methods dealing with specific activation functions such as ReLU neural networks to enhance their scalability.

\section*{Acknowledgment}
The  material  presented  in  this  paper  is  based  upon  work supported  by  the  Air  Force  Office  of  Scientific  Research (AFOSR)  through  contract  number  FA9550-18-1-0122  and the Defense Advanced Research Projects Agency (DARPA) through contract number FA8750-18-C-0089. The U.S. Government is authorized to reproduce and distribute reprints for  Government  purposes  notwithstanding  any  copyright notation  thereon.  The  views  and  conclusions  contained herein are those of the authors and should not be interpreted as  necessarily  representing  the  official  policies  or  endorsements, either expressed or implied, of AFOSR or DARPA. This work is also partially supported by Research Scholarship and Creative Activity Grant Program of Augusta University.

\ifCLASSOPTIONcaptionsoff
  \newpage
\fi



\bibliographystyle{IEEEtran}
\bibliography{reference}
\end{document}